\documentclass[11pt]{article}
\usepackage{graphicx}
\usepackage[left = 3cm, right = 3cm, top = 3cm, bottom = 3cm]{geometry}
\usepackage{subcaption}
\usepackage{wrapfig}
\usepackage{amsmath}
\usepackage{amsfonts}
\usepackage{amsthm}
\usepackage{dsfont}
\usepackage{hyperref}
\usepackage{longtable}
\usepackage{booktabs}
\usepackage{mathtools}
\usepackage{tcolorbox}
\usepackage{enumerate}
\usepackage{comment}
\usepackage{amssymb}
\usepackage{authblk}
\usepackage{xspace}
\usepackage{authblk}
\usepackage{appendix}
\usepackage{tabularx}
\usepackage{algorithm}
\usepackage{algpseudocode}
\usepackage{multirow}
\usepackage{adjustbox}

\newtheorem{observation}{Observation}

\theoremstyle{definition}
\newtheorem*{def*}{Definition}

\theoremstyle{definition}
\newtheorem*{prf*}{Proof}
\theoremstyle{definition}
\newtheorem*{prfthm*}{Proof of Theorem}

\newcommand{\Anil}[1]{{\noindent \textcolor{blue} {\textbf{Anil}: {#1}}}}

\title{Cascading Impacts of the USA–China Trade War on Global Oilseed Supply Chain}

\date{}
\author{\small Diksha Gupta$^1$, Ritwick Mishra$^{1,2}$, Achla Marathe$^{1,3}$, Krista Danielle Yu$^{4}$, Anil Vullikanti$^{1,2}$\\
\vspace{5pt}
$^1$Biocomplexity Institute,\\
$^2$Department of Computer Science,\\
$^3$Department of Public Health Sciences,\\
University of Virginia, Charlottesville, Virginia, USA\\
$^4$Department of Economics, De La Salle University, Manila, Philippines}

\begin{document}

\maketitle

\begin{abstract}
\noindent Global supply chains are highly interconnected, making them vulnerable to cascading disruptions induced by trade policy shocks. Understanding how such disruptions propagate through production networks, and how mitigation mechanisms such as trade reallocation and production adjustment can alleviate their impacts—remains a central challenge. In this work, we develop a linear programming formulation of an Input-Output (IO) system that captures cascading supply-chain disruptions together with trade reallocation and production expansion. Our formulation yields a system-level equilibrium characterization that enables the joint analysis of disruption propagation and mitigation within a unified framework. 

We propose an efficient algorithm for computing approximate equilibrium solutions by minimizing total unmet demand in large IO systems. We apply our approach to tariff-induced disruptions in the global oilseeds supply chain arising from the U.S.-China trade war. Our results show that a localized 70\% disruption to flows from the U.S. oilseeds sector to China leads to a 3.27\% loss in global output, with China experiencing a disproportionate loss of 14.02\%. As a counterfactual mitigation strategy, allowing a 20\% reallocation from Brazil oilseed to China significantly reduces global output losses to 1.36\%, although pressure remains high on final-demand flows. We further investigate production expansion as an additional mitigation mechanism and show that it introduces tradeoffs between reducing global final-demand losses and protecting Brazil’s domestic flows. Domestic reallocation disproportionately shifts losses toward smaller economies, while globally sourced expansion redistributes losses more broadly across the network.

\end{abstract}

\section{Introduction}
\label{sec:intro}

In today’s globalized economy, supply chains span multiple countries, making them highly sensitive to shifts in trade policy. Even localized disruptions can propagate through production networks, altering trade flows and reshaping market equilibria worldwide. These dynamics are particularly evident in agricultural commodities such as soybeans, where China depends heavily on imports from a small set of producers—Brazil (40\%), the United States (28\%), and Argentina (12\%), amounting to approximately 105 million tons in 2024 ($\sim$ 85\% of domestic demand)~\cite{peng2025global}.

Trade policy shocks can rapidly reconfigure these dependencies. For example, the 2018 U.S.–China trade war led to a sharp substitution away from U.S. soybeans toward Brazilian exports~\cite{colussi2025us_china_soybean}. More recently, following a six-month halt in U.S. imports in 2025, China increased sourcing from Brazil and Argentina, aided by temporary export tax relaxations in Argentina, before partially restoring trade flows through a 2026–2028 purchase agreement with the United States~\cite{USDAFASOilseedsReportDec2025}.


These cross-border trade shocks not only affect supply dynamics in importing countries but also propagate through the global soybean trade, as limited short-run production and export capacity restrict the system’s ability to absorb sudden demand shifts. Following the 2018 United States-China trade war, China’s soybean imports reallocated sharply away from the United States toward Brazil. This reallocation coincides with a pronounced increase in Brazil’s domestic soybean prices~\cite{cepea_soybean_indicator}. Notably, this imbalance was accompanied by a surge of over $20\%$ in Brazil’s soybean oil imports at higher market prices in 2021~\cite{colussi2025us_china_soybean}. Motivated by this trade-induced imbalance in resource allocation, the current study focuses on modeling the cascading impacts of trade disruptions, with particular attention to how shocks reallocate resources between competing end uses and induce production adjustments.



Recent works have studied the impact of trade disruptions as a function of the pricing-quantity equilibria using Computable General Equilibrium models ~\cite{eugster2022effect,grossman2024tariffs,auclert2025macroeconomics}.While these studies capture economy-wide interactions through price and quantity adjustments, it is difficult to represent physical bottlenecks, capacity constraints, and cascading shortages that arise in the interconnected global networks. Elobeid et al. ~\cite{elobeid2021jae} analyzed the impacts of China's retaliatory tariffs using a partial equilibrium model combined with input-output analysis.

To explicitly represent inter-sectoral production dependencies and capacity-driven propagation of shocks, an alternative modeling approach treats the global supply chain as a network of input–output relationships. The representation of the global supply chain as an \textit{input-output (IO) system} originates from Leontief’s formulation of interdependent production economies \cite{leontief1986input}. This framework became the canonical approach for representing global production systems, encoding intermediate production dependencies across countries and sectors within a unified accounting structure. Today, IO-based models underpin most large-scale analyses of global supply chains, trade shocks, and cascading disruptions.

Many existing studies analyze supply chain disruptions using agent-based modeling (ABM) built on IO representations~\cite{guan2020global,mu2021robustness,cao2025data,gomez2021supply,moran2025critical,yabe2025behaviour}. These approaches explicitly simulate the propagation of disruptions over time through interactions among heterogeneous economic agents. While ABMs are well suited for capturing localized dynamics and scenario-specific adaptation mechanisms, they rely heavily on calibration choices and behavioral assumptions. As a result, they do not naturally yield global equilibrium characterizations and scale poorly to economy-wide settings, limiting their applicability for analyzing system-level effects of large-scale disruptions.

\smallskip

To address these limitations, we formulate the input-output (IO) model as a \textit{linear program (LP)} to analyze the cascading impact of trade  disruptions on the global supply chain. Existing works have widely used LP based techniques for supply-chain optimizations~\cite{vogstad2009input}. Santos et al. use LP to study the system-wide impact of disruptions to interconnected economies in~\cite{santos2006inoperability}. Recent work by Soltanisehat et al. ~\cite{soltanisehat2024multiregional} proposes a novel multiobjective mixed-integer linear programming formulation for optimizing lockdowns policy for the US economy during COVID-19.

In contrast to prior LP-based IO studies that primarily analyze static disruption impacts, we model trade shocks together with endogenous trade reallocation and production adjustment, enabling the study of both cascading effects and mitigation pathways within a unified framework. This formulation enables us to quantify how policy-driven adjustments propagate through the supply chain and provides insights into the design of more resilient and adaptive trade strategies. Our analysis is explicitly counterfactual: we recompute global production–trade equilibria under alternative trade and mitigation scenarios to isolate the system-level effects of policy-induced disruptions.

The China-US trade dispute has been analyzed using price-based partial or general equilibrium models~\cite{elobeid2021jae}. These models explain trade diversion through price adjustment and substitution, with prices adjusting to ensure that markets clear. 
Recent work has extended multi-regional input-output systems using optimization techniques that include nonlinear and linear programming. However, these models typically preserve feasibility by allowing smooth substitution across trade partners~\cite{bonfiglio2025impact}, focus on identifying trade that minimize specific outcomes such as emissions~\cite{yan2025buy}, or treat trade as a buffer against exogenous production disruptions~\cite{He2019energyecon}.

\medskip
\noindent\textbf{Contributions.}
This work makes three main contributions. 
First, we propose a linear programming formulation of an Input-Output system that captures cascading supply-chain disruptions together with \emph{endogenous} trade reallocation and production adjustment under capacity constraints. Unlike prior IO-based LP models that primarily analyze static disruption impacts, our formulation enables the study of both disruption propagation and mitigation pathways within a unified equilibrium framework. Second, we apply our framework to a large-scale global case study of tariff-induced disruptions in the oilseeds supply chain arising due to the US-China trade wars, using GTAP MRIO data, and conduct counterfactual analyses that quantify how trade reallocation and production expansion differentially mitigate cascading losses across regions and end uses. Third, using our approach we demonstrate that trade diversion is conditional, that is, when trade and capacity constraints limit adjustment, reallocation may be insufficient to fully offset the trade disruption which will result to unmet final demand. This shifts the analysis from price-mediated adjustment to the structural limits of global supply chains which are not captured by price-clearing models.

\section{Methodology}\label{sec:method}

We begin by introducing the notation used in this work in Table~\ref{tab:notation}. We denote sets by calligraphic letters; parameters and constants are denoted by capital letters; and optimization variables are denoted by small letters. 
\begin{table}[!ht]
\centering
\caption{Summary of notation.}
\begin{tabular}{p{0.18\linewidth} p{0.75\linewidth}}
\toprule
\textbf{Symbol} & \textbf{Description} \\
\midrule

\textbf{Parameters} & \\ [3pt]

$\mathcal{F}$ & Set of firms. \\

$\mathcal{S}$ & Set of sectors (each firm belongs to exactly one sector). \\

$\mathcal{H}$ & Set of households (one household per country). \\

$\sigma: \mathcal{F} \to \mathcal{S}$ & Mapping assigning each firm to its output sector. \\

$P_i$ & Primary inputs available to firm $i$ (e.g., labor, resources, capital). \\

$A_{s,j}$ & Leontief input coefficient: input from sector $s$ used by firm $j$. \\

$B_i$ & Leontief coefficient for primary inputs required by firm $i$. \\

$H_{i,k}^*$ & Household demand from household $k \in \mathcal{H}$ for the output of firm $i$ in base model. \\

$Z_{i,j}^*$ & Intermediate flow from firm $i$ to firm $j$ in base model. \\

$X_i^*$ & Output of firm $i$ in base model. \\

$\Delta_{i,j}, \Delta_{i,k} \in [0,1]$ & Fraction of disruption in the flow from firm $i$ to firm $j$ \& household $k$ induced by tariff policy ($1 =$ full disruption, $0 =$ none). \\

$\alpha_{i,j}, \alpha_{i,k} \ge 0$ & Elasticity factor for substitution of firm $i$'s output to firm $j$ \& household $k$. \\

$\beta_i \ge 0$ & Elasticity factor for increasing primary input use of firm $i$. \\

$\eta_i \ge 0$ & Elasticity factor for increasing production of firm $i$. \\[10pt]

\textbf{Variables} & \\[3pt]

$x_i \ge 0$ & Output of firm $i$. \\

$z_{i,j} \ge 0$ & Intermediate flow from firm $i$ to firm $j$. \\

$h_{i,k} \ge 0$ & Household supply from firm $i$ to household $k$. \\

$G \ge 0$ & Gap between total household supply and total household demand introduced due to disruption. \\

\bottomrule
\end{tabular}
\label{tab:notation}
\end{table}

\subsection{System Description} 

We model the global supply chain as a network of countries, sectors, firms, and households, with the following structure:
\begin{itemize}
    \item Each firm corresponds to a \emph{single sector in a single country}. Thus, a firm is uniquely identified by a country-sector pair. While a country may have many sectors, and a sector may exist in multiple countries, each sector within each country is represented by exactly one firm in our model.

    \item Firms produce output using (i) intermediate inputs sourced from other firms across the globe and (ii) primary inputs to which they have direct access, such as labor, capital, and natural resources.

    \item Each firm must produce sufficient output to satisfy both the intermediate-input demand of downstream firms and the final household demand.

    \item Firm production is governed by a Leontief production function, and inter-firm flows are governed by a standard Input–Output (IO) model.

    \item We model each sector in each country by a single representative firm, and a single representative household for each country.
\end{itemize} 

\begin{figure}[t]
    \centering
    \includegraphics[width=0.75\linewidth, clip, trim = 2cm 10cm 2cm 4cm]{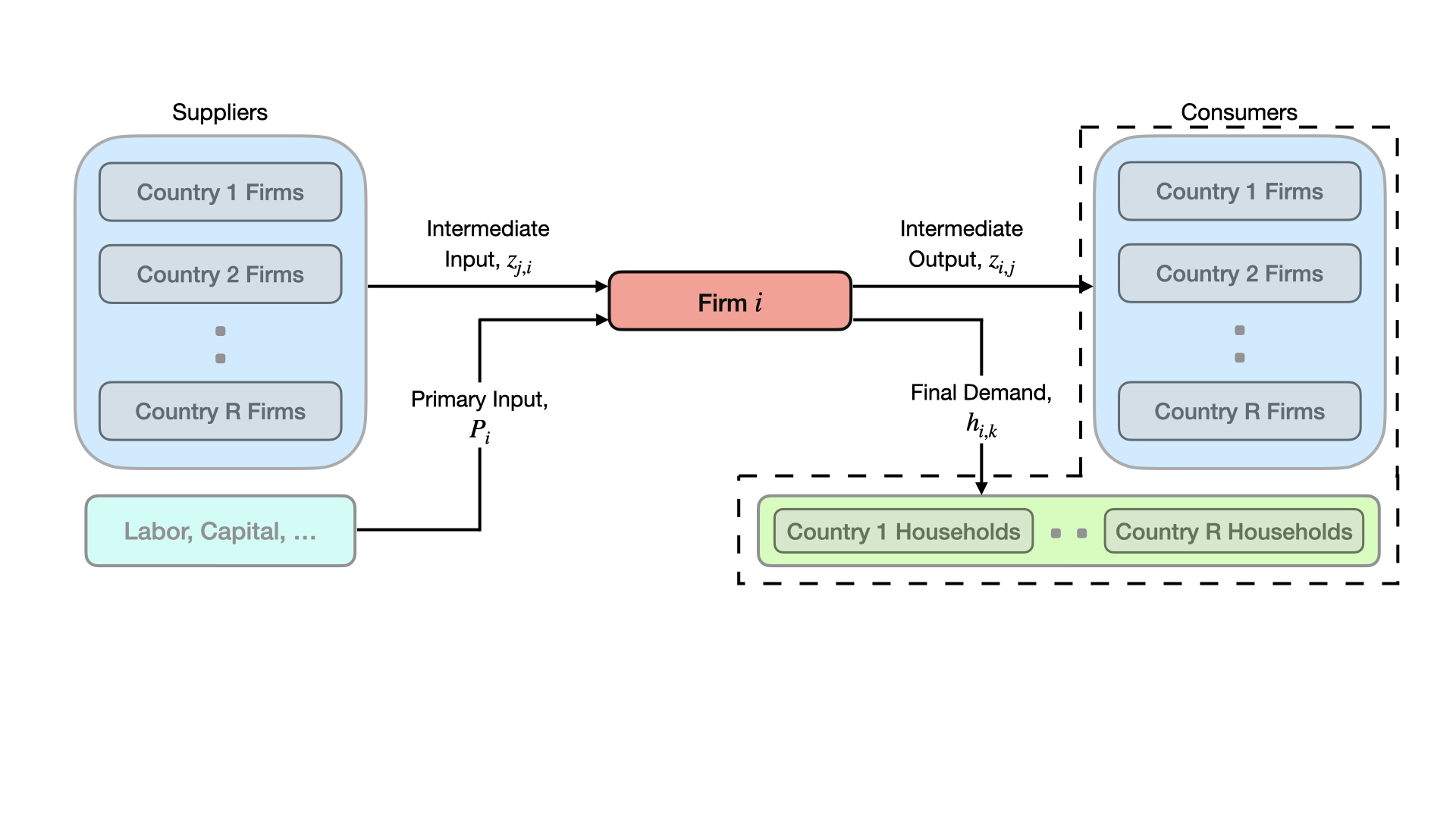}
    \caption{Schematic representation of the input--output structure for a representative firm. Firm $i$, corresponding to a country--sector pair, produces output $x_i$ using intermediate inputs $z_{j,i}$ from supplier firms and primary inputs $P_i$. Its output is allocated to consumer firms as intermediate supply $z_{i,j}$ and to households as final demand $h_{i,k}$.} 
    \label{fig:system_description}
\end{figure}

We further simplify the system by modeling only the flows between firms, treating country identifiers as metadata that are not directly relevant to the algorithmic design. Figure~\ref{fig:system_description} illustrates the resulting system abstraction. Country and sector distinctions are later reinstated at the firm level for the empirical case study in Section~\ref{sec:empirical}.

\subsection{Modeling trade disruptions and mitigation strategies in IO systems}
\label{sec:disruption-LP}

Prior work on estimating the impact of disruptions in input–output (IO) systems, such as \cite{guan2020global}, typically relies on iterative update mechanisms to compute post-disruption outcomes. These approaches are well suited for analyzing disruption scenarios in which production levels are reduced exogenously and no endogenous reallocation or production expansion is permitted. In such settings, the propagation of shocks is driven purely by fixed inter-sectoral dependencies.

However, modeling realistic responses to large-scale trade disruptions requires incorporating mitigation mechanisms, including the reallocation of trade flows across alternative suppliers and the expansion of production subject to capacity constraints. Introducing these mechanisms poses several challenges. First, re-allocations must respect the underlying flow conservation and interdependency structure of the production network to avoid infeasible solutions. Second, production expansion is constrained by sector- and country-specific capacity limits on the primary inputs as well as received intermediate inputs, making it insufficient to simply redistribute unmet demand. Finally, these adjustments interact globally: changes in one part of the network can induce secondary bottlenecks and cascades elsewhere, necessitating a system-level equilibrium formulation.

The linear programming based formulation is illustrated in Figure~\ref{fig:global_model-algo-increase} addresses these challenges. We describe the role and interpretation of each of the constraints from our approach below.

\begin{itemize}
    \item \textit{Leontief Constraints.}  The Leontief constraints, together with the objective of maximizing aggregate output, encode the Leontief production function.The coefficient $A_{s,i}$ specifies the amount of sector-$s$ intermediate input required per unit output of firm $i$, and the constraint ensures that total intermediate input sourced from all firms $j$ with $\sigma(j)=s$ is sufficient to support production level $x_i$.  
    Similarly, the primary input constraint $B_i x_i \le (1+\beta_i) P_i$ limits output by the availability of primary inputs. The parameter $\beta_i \ge 0$ enables controlled production expansion by allowing proportional increases in available primary inputs relative to the baseline.

    \item \textit{IO Constraints.}  
    These constraints link production levels to outgoing flows. The constraint enforces the classical IO balance condition, ensuring that total intermediate outflows along with the total household flow does not exceed the total output for any firm in the system.

    \item \textit{Disruption Constraints.} 
    These constraints model exogenous shocks by limiting intermediate flows and household flows to a fraction of their baseline values $Z^*_{i,j}$ and $H^*_{i,k}$, respectively. The parameters $\Delta_{i,j}, \Delta_{i,k} \in [0,1]$ specify the severity of the disruption on the flow from firm $i$ to $j$ for the former and from firm $i$ to household $k$ in the latter case. These parameters force a proportional reduction relative to the pre-disruption flow. 

    \item \textit{Reallocation Constraints.}  
    These constraints restrict the extent to which firms not directly disrupted can adjust their intermediate and household flows. The parameters $\alpha_{i,j}$ and $\alpha_{i,k} \in [0,1]$ controls the allowable deviation from baseline flows $Z^*_{i,j}$ and $H^*_{i,k}$, respectively, ensuring that post-disruption reallocations remain consistent with pre-existing economic relationships and capacity limits.

    \item \textit{Production Constraints.} We model production expansion as a mitigation mechanism by imposing firm-specific bound on output within the feasible region of $\mathcal{P}(G)$. These constraints represent exogenous capacity expansions at selected firms. In conjunction with reallocation constraints, the resulting LP characterizes how policy-driven production increases propagate through the global supply chain via adjustments in intermediate and household supply flows.

    \item \textit{$\beta$-Adjustment Constraint.} To enable production expansion, we require primary input to satisfy the Leontief constraint. We model this by increasing the available primary input by $\beta_i$, which is then determined as a function of production expansion parameter $\eta_i$. 
    
    \begin{figure}[t!]
    \centering
    \begin{tcolorbox}[colback=gray!10, colframe=gray!50, boxrule=0.8pt, left=4pt, right=4pt, top=6pt, bottom=6pt]
    
    For $G\geq 0$, the polytope, denoted by $\mathcal{P}(G)$, is defined by the following constraints.
    For all firms $i \in \mathcal{F}$
    \begin{flalign}
        &\textit{Leontief constraints:} 
        && A_{s,i}\, x_i 
            \;\le\; 
            \sum\nolimits_{\substack{j \in \mathcal{F}: \sigma(j)=s}} z_{j,i}, \qquad \forall s \in \mathcal{S}; &\nonumber \\
        &
            && B_i x_i \le (1+\beta_i) P_i^*,
            &\\[4pt] 
        &\textit{IO constraints:} 
        && \sum\nolimits_{j \in \mathcal{F}} z_{i,j} + \sum\nolimits_{k\in \mathcal{H}}h_{i,k} 
            \;=\; x_i,
            &&\nonumber \\[4pt]
    &\textit{Disruption constraints:} 
        && z_{i,j} 
            \;\le\; (1-\Delta_{i,j})\;\, Z^*_{i,j},
            \qquad \forall j \in \mathcal{F} \text{ with } \Delta_{i,j} > 0, \nonumber\\
    & && h_{i,k} \;\leq\, (1-\Delta_{i,k})\, H_{i,k}^*, \qquad \forall k \in \mathcal{H} \text{ with } \Delta_{i,k} > 0,
            &&\nonumber \\[4pt]
    &\textit{Reallocation constraints:} 
        && z_{i,j} 
            \;\le\; (1+\alpha_{i,j}) \;\, Z^*_{i,j},
            \qquad \forall j \in \mathcal{F} \text{ with } \Delta_{i,j} = 0,
            &&\nonumber \\
    &
        && h_{i,k} 
            \;\le\; (1+\alpha_{i,k}) H^*_{i,k},
            \qquad  \forall k \in \mathcal{H} \text{ with } \Delta_{i,j} = 0,
            &&\nonumber\\
    &\textit{Production constraint:} && x_i \;=\; (1 + \eta_i)\; X_i^*, \qquad \qquad  \text{if}\; \eta_i > 0 && \nonumber \\
    &\textit{$\beta$-Adjustment constraint:} && \beta_i \;=\; \textstyle (1 + \eta_i)\frac{B_iX_i^*}{P_i^*}-1,\qquad\text{\hspace{-4pt}if}\; \eta_i > 0 && \nonumber \\
    &\text{All households }k \in \mathcal{H}:&\nonumber\\
    &\textit{Demand constraints:} 
        && \sum\nolimits_{j \in \mathcal{F}: \sigma(j)=s} h_{j,k} \;\leq\; \sum\nolimits_{j \in \mathcal{F}: \sigma(j)=s} H_{j,k}^*, \qquad \forall s \in \mathcal{S};
            &&\nonumber \\[4pt]
    &\textit{Slack constraint:} 
        && \sum\nolimits_{j \in \mathcal{F}, k \in \mathcal{H}} h_{j,k} + G \;=\; \sum\nolimits_{j \in \mathcal{F}, k \in \mathcal{H}} H_{j,k}^*, 
            && \nonumber
    \end{flalign}

    Let $X = \sum_i x_i$ denote the global supply chain output of the system. Then,
    
    \begin{itemize}
    \item Phase 1: Output $X^\star := \max \; \{ X:(x,z,h,G)\in \mathcal P(G),\; G\ge 0 \}$.
    \item Phase 2: Output $G^\star := \min \; \{ G: (x,z,h,G)\in \mathcal P(G),\; G\ge 0,\; X \ge X^\star\} $.
    \end{itemize}
    
    \end{tcolorbox}
    \caption{Linear programming formulation for modeling cascading supply-chain disruptions and mitigation through trade reallocation and production expansion. The first phase maximizes feasible global output, while the second phase minimizes unmet final demand among output-maximizing allocations.}
    \label{fig:global_model-algo-increase}
\end{figure}

    \item \textit{Demand Constraint.}  
    These constraints limit the sector-wise flow to all household $k$ by the MRIO values. This enables the demand to not be met exactly for all households, and treats the household flows as variables. 

    \item \textit{Slack Constraint.} We introduce a system level slack variable, $G$ that captures the total unmet global household demand in the system. By allowing the global household demand to be partially unmet, the model departs from standard IO models where final demand is treated as a fixed parameter. Instead, demand is treated as a decision variable constrained by the MRIO values, which makes the LP flexible enough to represent supply shortages due to disruptions, and mitigation strategies through limited reallocation capacity, substitution and increased production.
\end{itemize}

The optimization LP formulation enables the system to endogenously identify which sectors and countries experience demand gaps under disruption, and to quantify the impact of targeted reallocation as well as increased production strategies in preserving the largest possible level of global output while simultaneously minimizing total unmet demand. In the current work, motivated by the importance of prioritizing the total economic activity in the global supply chain. Figure~\ref{fig:global_model-algo-increase} illustrates our two phase Linear program for maximizing the global system output and the resulting minimum feasible household disruption. 

\smallskip

With respect to our LP model, it is important to note a limitation in how we model household demands. The LP does not encode any notion of differential “value of life” or distributional priority when reducing unmet household demand: all units of unmet final demand are treated uniformly across countries and households. As a result, the optimal allocation is driven primarily by the objective of maximizing global output (and only secondarily by slack reduction), rather than by welfare weights that might prioritize certain populations or essential needs in practice. In the real world, policy responses may implicitly or explicitly value unmet demand differently across countries (e.g., due to income, vulnerability, critical goods, or humanitarian considerations), whereas our formulation enforces no such heterogeneity. Consequently, the household-loss outcomes should be interpreted as efficiency-driven, rather than as recommendations for how societies ought to prioritize demand satisfaction.

\clearpage
\begin{observation}
\label{obs:disruption-LP-BO}
    Any solution $ (x^*, z^*, h^*, G^*)$ computed by the LP from Figure~\ref{fig:global_model-algo-increase} satisfies the Leontief IO model.
\end{observation}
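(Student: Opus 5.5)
The proof amounts to checking that every feasible point of $\mathcal{P}(G)$ satisfies the defining relations of a Leontief input--output economy. The optimal solutions $(x^*,z^*,h^*,G^*)$ from Phase~1 and Phase~2 are in particular feasible points of $\mathcal{P}(G^*)$. Optimality is used only at one step, where it lets us upgrade inequalities to the classical equalities. By the Leontief IO model I mean three requirements. First, an accounting (row) balance: each firm's output equals its intermediate sales plus its final sales. Second, a technology (column) condition: each firm uses sector-$s$ inputs in the fixed proportion $A_{s,i}$ and primary inputs in proportion $B_i$. Third, nonnegativity and feasibility of final demand.

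\textbf{Row balance.} This is immediate from the IO constraints of $\mathcal{P}(G)$, which hold with equality: $\sum_j z^*_{i,j}+\sum_k h^*_{i,k}=x^*_i$ for every $i\in\mathcal F$. Nonnegativity of $x^*,z^*,h^*$ is part of the variable definitions. The disruption, reallocation and demand constraints only restrict individual flows further. The slack constraint and $G^*\ge 0$ show that total final deliveries $\sum h^*_{j,k}$ are at most the base final demand. So the realized final demand vector is a legitimate (reduced) final demand for the economy.

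\textbf{Leontief technology condition.} The Leontief constraints give $A_{s,i}x^*_i\le \sum_{\sigma(j)=s} z^*_{j,i}$ and $B_i x^*_i\le(1+\beta_i)P^*_i$. Hence $x^*_i\le \min\bigl\{\min_s \sum_{\sigma(j)=s}z^*_{j,i}/A_{s,i},\,(1+\beta_i)P_i^*/B_i\bigr\}$, which is precisely the Leontief production function evaluated at the inputs firm $i$ receives. I then show that $x^*$ attains this minimum. The argument is by exchange. Suppose some firm $i$ had $x^*_i$ strictly below its Leontief capacity given its received inputs, and suppose $\eta_i=0$, so $x_i$ is not pinned by the production constraint. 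Then we could raise $x^*_i$ by a small $\varepsilon>0$ and route the extra output to some outgoing flow with slack in its reallocation or demand bound. This would strictly increase $X$, contradicting Phase~1 optimality. If no outgoing flow has slack, then output is demand-limited rather than input-limited. I treat this as excess input held by firm $i$, the usual ``free disposal'' reading of the $\le$ in the Leontief column constraints. For firms with $\eta_i>0$, $x^*_i=(1+\eta_i)X_i^*$ is fixed. The $\beta$-adjustment gives $(1+\beta_i)P^*_i=B_i(1+\eta_i)X^*_i=B_ix^*_i$, so the primary-input constraint is tight by construction. This shows the expansion is consistent with the Leontief coefficient $B_i$.

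\textbf{Main obstacle.} The hard step is the exchange argument just described. Firm $i$'s output is constrained both by its inputs and by the bounds on its outgoing flows, so strict Leontief equality $A_{s,i}x_i=\sum z_{j,i}$ need not hold for every $s$; only the minimum over sectors is generally binding. I would therefore state the observation in the standard ``Leontief with free disposal'' sense. In that form, the inequality constraints together with the IO equality are exactly the model's requirements, and the observation follows directly from feasibility. Phase~2 does not alter this, since it only selects among feasible output-maximizing points.
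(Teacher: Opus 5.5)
Your proposal is correct and follows essentially the same route as the paper. The paper's proof likewise reads off from the Leontief constraints that $x_i^*$ is at most the Leontief production value of firm $i$'s received inputs, treats this as feasibility for the Leontief IO model, and notes that Phase~2 keeps $X\ge X^\star$. Your free-disposal reading is therefore exactly what the paper proves, and your extra points are accurate refinements rather than a different argument: equality can genuinely fail because the IO constraint is an equality with capped outflows, you handle $\eta_i>0$ via the $\beta$-adjustment, and you use the correct cap $(1+\beta_i)P_i^*/B_i$, which the paper's displayed bound misprints as $\beta_i P_i/B_i$.
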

\begin{proof}
    At the end of Phase 1, the optimal solution satisfies the Leontief constraints for each firm in conjunction with maximizing the total global output as the objective, implying
    $$\textstyle x_i^* \leq \min\left\{ \min_{s \in \mathcal{S}} \Big\{\frac{1}{A_{s,i}}\sum_{ j:\sigma(j)=s} z^*_{j,i}\Big\} \; , \;  \frac{\beta_i P_i}{B_i}\right\}.$$
    This corresponds to the LP solution being a feasible solution for the Leontief IO model. 

    Next, in Phase 2 note that the optimal solution minimizes the total slack in the system while satisfying the constraint that global output, $X\geq X^*$, thus enforcing the max condition from Phase 1. Hence, the Leontief IO model is not violated.
\end{proof}

\subsection{Scaling the method} 

Despite the two phase search for the optimal global output and minimal slack, the feasibility check of the linear program is computationally intensive for large IO tables. To address this challenge, we apply a value-based sparsification of the intermediate flow matrix, removing flows below a threshold. This trade flow value based thresholding approach has been widely used in literature to reduce the dimensionality of the IO table while successfully conserving overall structure~\cite{simburger2023filter}. Formally, let $\tau$ be a flow threshold such that, for all $i,j \in \mathcal{F}$,
    \[ \textstyle 
        Z_{i,j}^\tau \;=\;
        \begin{cases}
            0, & \text{if } Z_{i,j}^* < \tau,\\[2pt]
            Z_{i,j}^*, & \text{otherwise.}
        \end{cases}
    \]
Using these updated intermediate flows and leveraging the separability of the Leontief production function across sectors, we compute the equilibrium outputs, $X_i^\tau$, revised  intermediate input coefficients for each sector $s$, $A_{s,i}^\tau$ and primary input coefficients, $B_i^\tau$ as:
\begin{align}
    \textstyle X_i^\tau &= \textstyle \sum_{j\in \mathcal{R} \times \mathcal{S}} Z_{i,j}^\tau + \sum_{j \in \mathcal{R}} H_{i,j}\, \qquad A_{s,i}^\tau \; &= \textstyle \; \frac{\sum_{j \in \mathcal{R} \times \{s\}}Z_{j,i}^\tau}{X_i^\tau}, \quad \text{and} \quad  B_{i}^\tau \; = \; \frac{P_i}{X_i^\tau}.
\end{align}

\noindent To study the impact of our sparsification strategy on the base MRIO, we define the value-added loss for the intermediate flow of firm $i \in \mathcal{F}$ as
    \[ 
        \text{Intermediate Value-Added Loss}_i(\tau)
        \;=\;
        \left(\frac{\sum_{j\in \mathcal{F}}Z_{i,j}^* - \sum_{j\in \mathcal{F}}Z_{i,j}^\tau}{\sum_{j\in \mathcal{F}}Z_{i,j}^*}\right) \times 100,
    \]
where $Z_{i,j}^\tau$ is the intermediate flows of firm $i$ under threshold $\tau$ and $Z_{i,j}^*$ is the original MRIO intermediate flows.

\section{Empirical Setup}\label{sec:empirical}

To illustrate the cascading impact of USA-China trade disruption, we focus on a bilateral trade flow of strategic importance: oilseed exports from US to China. China is the largest importer of US soybeans and related oilseed products. Disruptions to this flow propagate across multiple sectors in both economies, with effects that amplify throughout the global supply chain network. It provides a setting where we can study how disruptions cascade through interconnected production networks, how firms and regions reallocate production in response, and how our formulation captures these systemic effects. Using this case study, we apply our methodology to assess the consequences of disrupting U.S. oilseed exports to China and to quantify the extent to which resulting losses can be mitigated through feasible reallocation and increased production.

To enable substitution as a mitigation mechanism under disruption, we make a simplifying assumption that flows between different regions within the same sector can be seamlessly redirected. In practice, the oilseeds sector encompasses heterogeneous crops whose production is often region-specific, which can limit substitution across regions. To strike a balance between realism and tractability, we restrict attention to the major soybean-exporting countries i.e. the US and Brazil, and allow substitution only among these producers in this study.

We consider four disruption and mitigation scenarios based on a supply shock to United State's oilseed exports to China. In all cases, we induce a disruption on the USA-OSD $\rightarrow$ CHN trade flow with disruption scale $\Delta$. 

\emph{Scenario I} serves as the baseline disruption case with no mitigation, in which no trade reallocation or production expansion is allowed. \emph{Scenario II} allows for  reallocation from Brazil's oilseed exports to China (BRA-OSD $\rightarrow$ CHN) with substitution (or reallocation) scale $\alpha \in \{0.20,0.25,0.30\}$. \emph{Scenario III} studies production expansion by jointly relaxing primary-input constraints and imposing an explicit output increase of $\eta$ for Brazil’s oilseed sector, while allowing reallocation of domestic supplies  across Brazil to its oilseed sector. Finally, \emph{Scenario IV} expands the production through global increase in flow capacity to Brazil oilseed sector.  

Together, these scenarios offer a systematic evaluation of cascading disruption effects and the relative effectiveness of trade reallocation and production expansion as mitigation mechanisms. We demonstrate our methodology on the Global Multi-Region Input–Output (MRIO) Table built from the GTAP (Global Trade Analysis Project) Version 12 for the year 2023 \cite{aguiar2022global}.

\begin{table}[t!]
\vspace{-20pt}
\centering
\caption{Experimental Scenarios and Parameters.}
\label{tab:scenarios}
\resizebox{\textwidth}{!}{
\begin{tabular}{lcccc}
\hline
\textbf{Parameters} & \textbf{Scenario I} & \textbf{Scenario II} & \textbf{Scenario III}& \textbf{Scenario IV}\\ 
\hline
Disrupted Country 
& USA--OSD $\rightarrow$ CHN 
& USA--OSD $\rightarrow$ CHN 
& USA--OSD $\rightarrow$ CHN 
& USA--OSD $\rightarrow$ CHN \\

Disruption Scale, $\Delta$ 
& 0.50, 0.60, \textbf{0.70} 
& 0.50, 0.60, \textbf{0.70} 
& \textbf{0.70} 
& \textbf{0.70}\\

Reallocation 
& -- 
& BRA--OSD $\rightarrow$ CHN 
& BRA--OSD $\rightarrow$ CHN 
& BRA--OSD $\rightarrow$ CHN \\

&&& BRA--* $\rightarrow$ BRA-OSD & * $\rightarrow$ BRA-OSD \\

Reallocation Scale, $\alpha$ 
& --  
& \textbf{0.20},0.25,0.30 
& \textbf{0.20}
& \textbf{0.20} \\

Increased Production 
& -- 
& -- 
& BRA--OSD
& BRA--OSD \\

Production Scale, $\eta$ 
& -- 
& -- 
&\textbf{0.10}
&\textbf{0.10} \\

\hline
\end{tabular}
}
We include a discussion of the experimental results for the values highlighted in bold.\\ Results for remaining values are provided in the Appendix.
\end{table}

\begin{figure}[t!]
    \centering

    \begin{subfigure}{0.48\linewidth}
        \centering
        \includegraphics[width=0.49\linewidth]{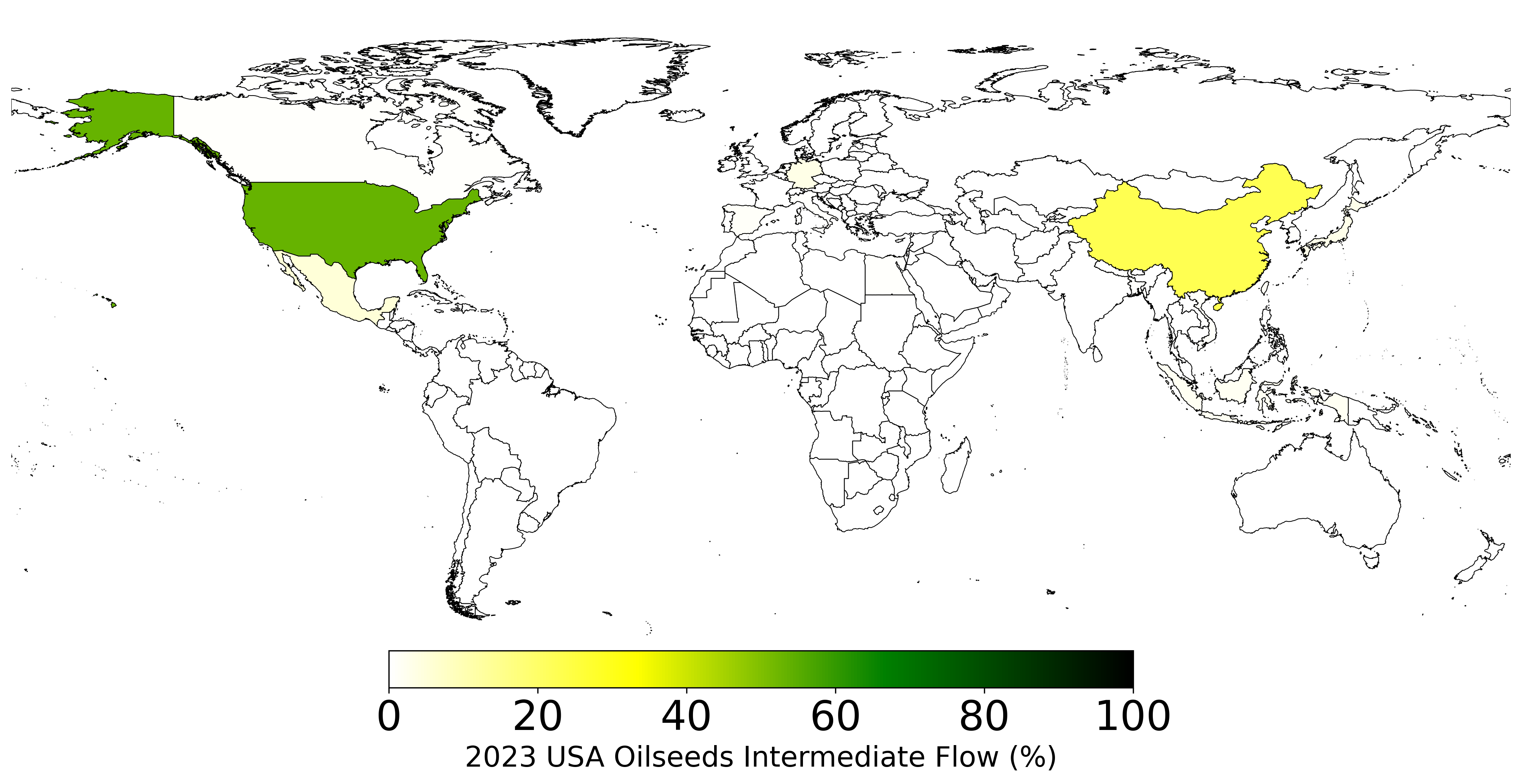}
        \includegraphics[width=0.49\linewidth]{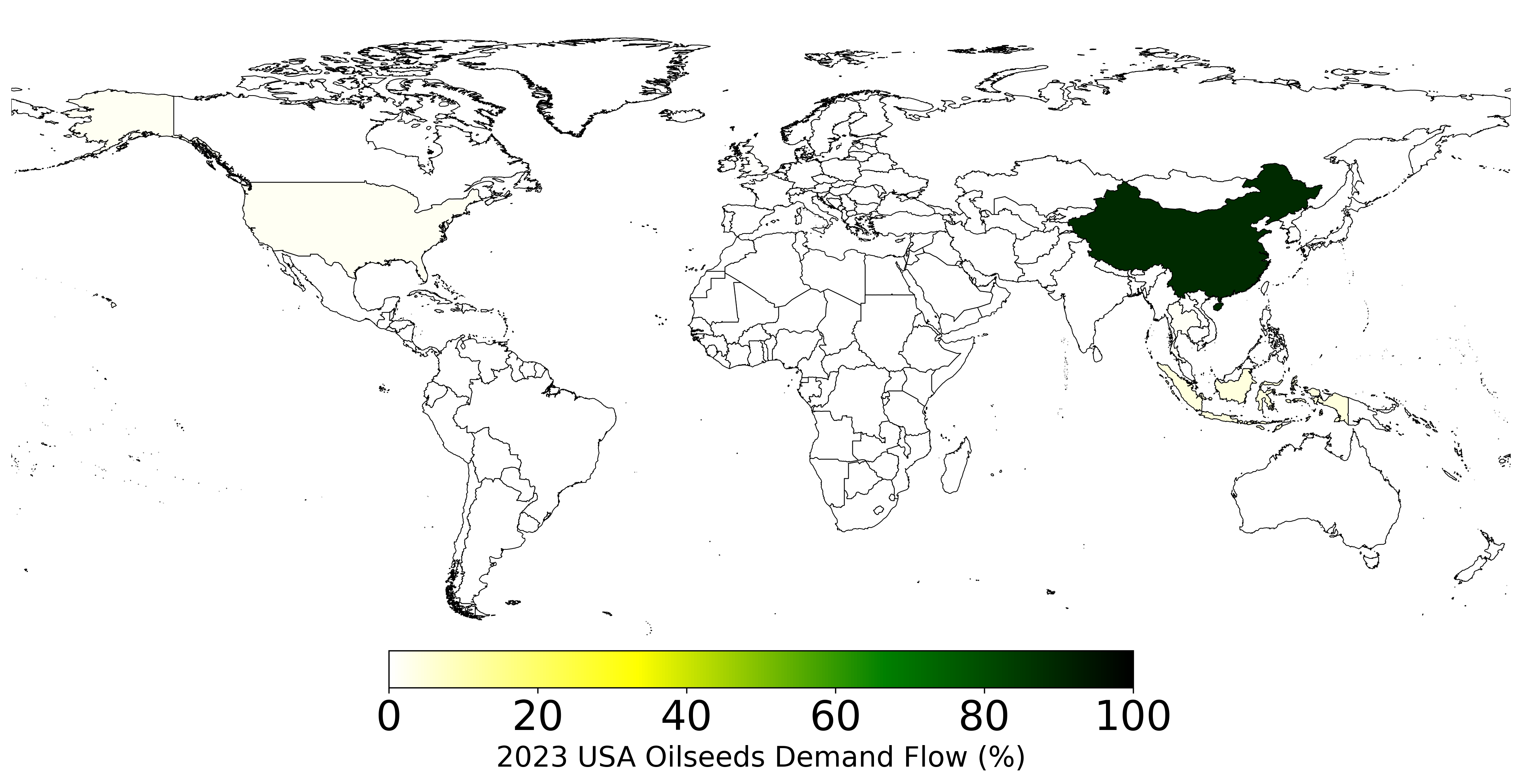}
        \caption{USA Oilseed flows to Consumers.}
    \end{subfigure}
    \hfill
    \begin{subfigure}{0.48\linewidth}
        \centering
        \includegraphics[width=0.49\linewidth]{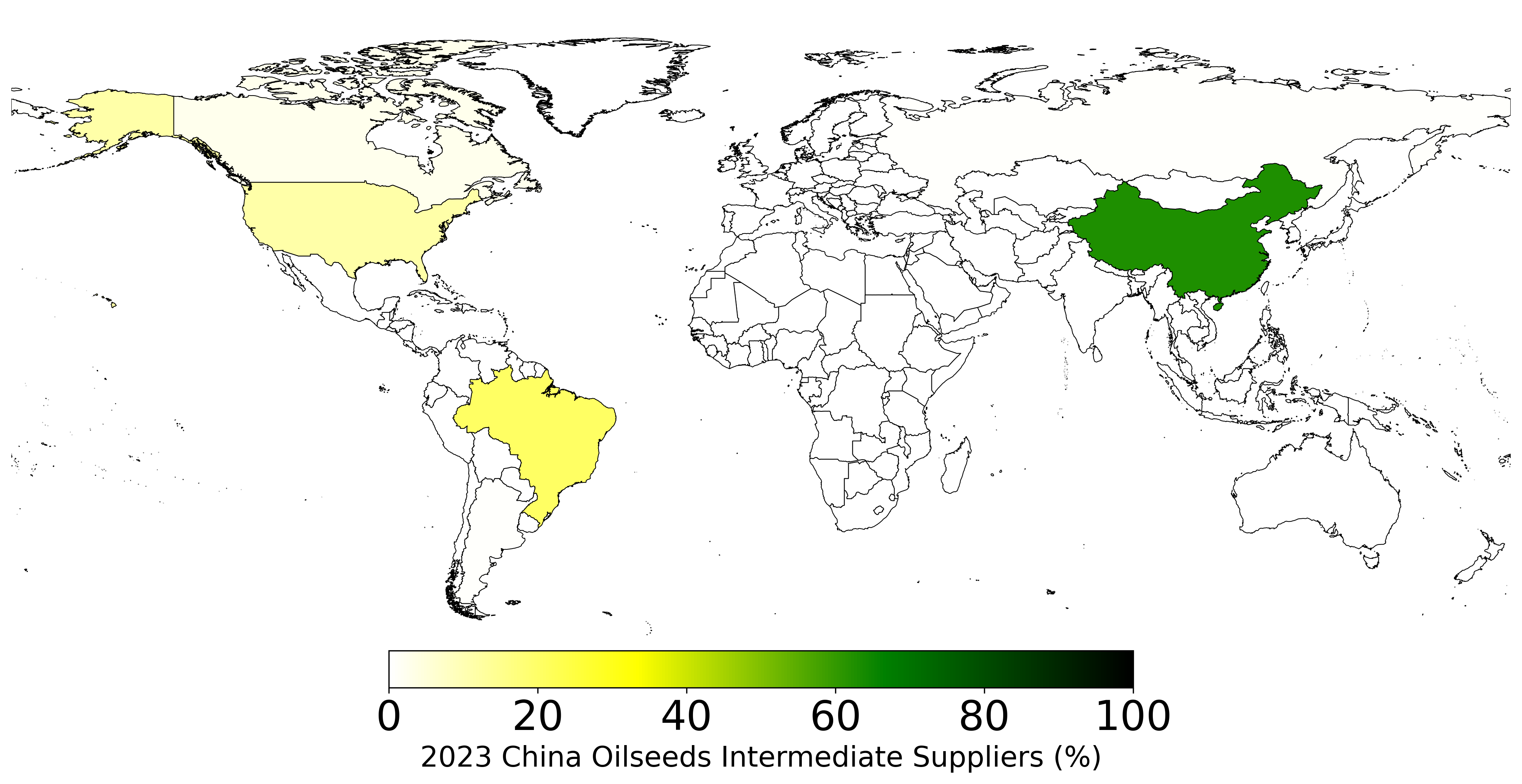}
        \includegraphics[width=0.49\linewidth]{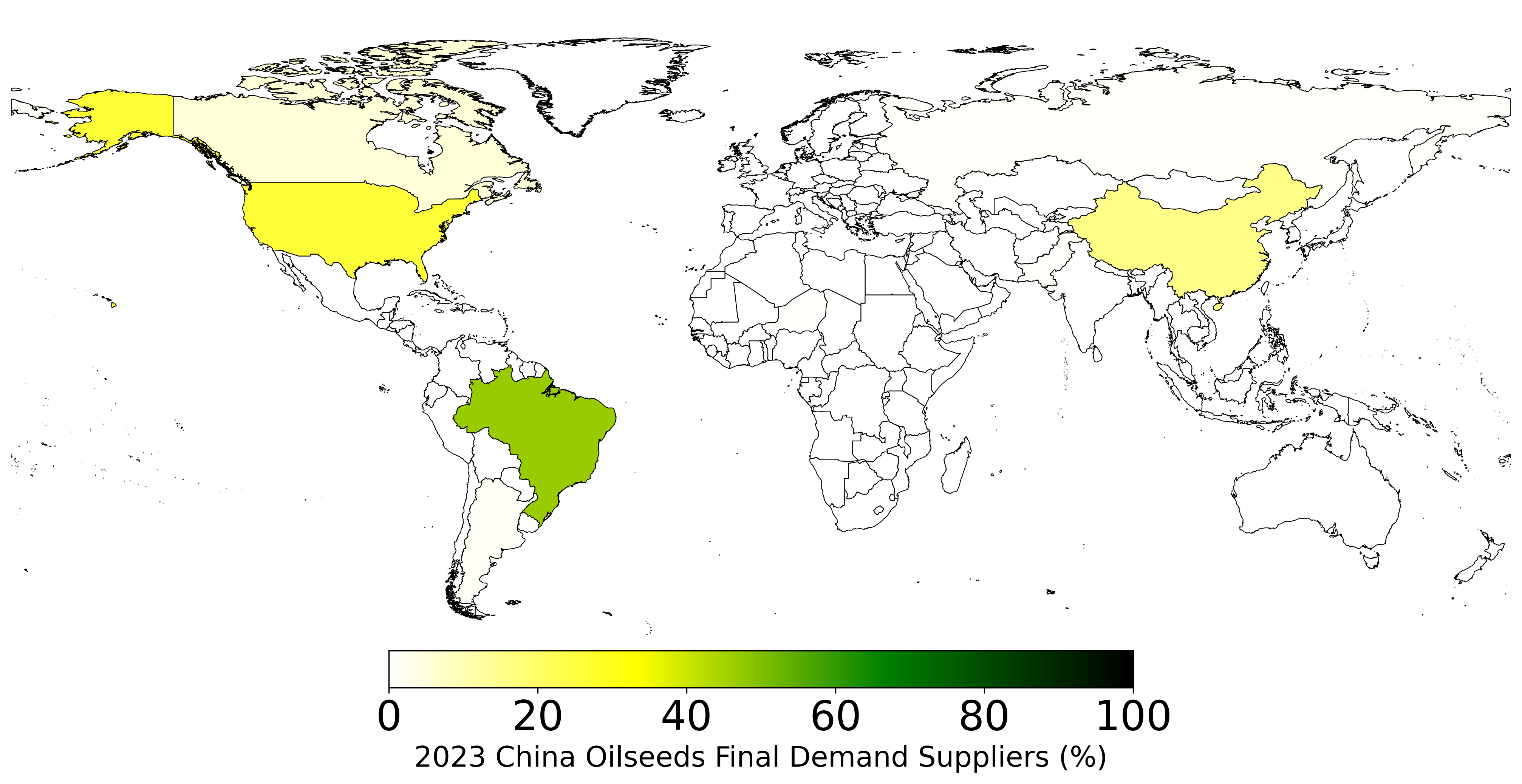}
        \caption{China Oilseed flows from Suppliers.}
    \end{subfigure}

    \caption{Baseline MRIO USA \& China Oilseed trade distribution in the Year 2023.}
    \label{fig:USA_OSD_flows}
\end{figure}

\section{Results}

We measure the economic impact in terms of the value-added loss for the entity with respect to the sparsified MRIO, unless mentioned otherwise. \emph{We measure economic activity using gross output derived from MRIO tables. Note that gross output differs from GDP, as it includes intermediate production flows across sectors. GDP corresponds to value-added and excludes intermediate consumption, whereas our measure captures total production activity within the network.}

Unless otherwise specified, we present results for four scenarios and report additional cases in the Appendix. We fix Scenario I to model a $70\%$ disruption in U.S. oilseed exports to China. Scenario II mitigates this disruption through a $20\%$ reallocation of Brazilian oilseed supply to China. Scenario III further supplements this reallocation with a $10\%$ increase in Brazilian oilseed production supported by domestic inputs, while Scenario IV instead allows this production expansion to be supported by global inputs.

\begin{figure}[b!]
\centering
\scriptsize

\includegraphics[width = 0.9\linewidth, clip, trim = 18cm 14cm 17cm 11cm]{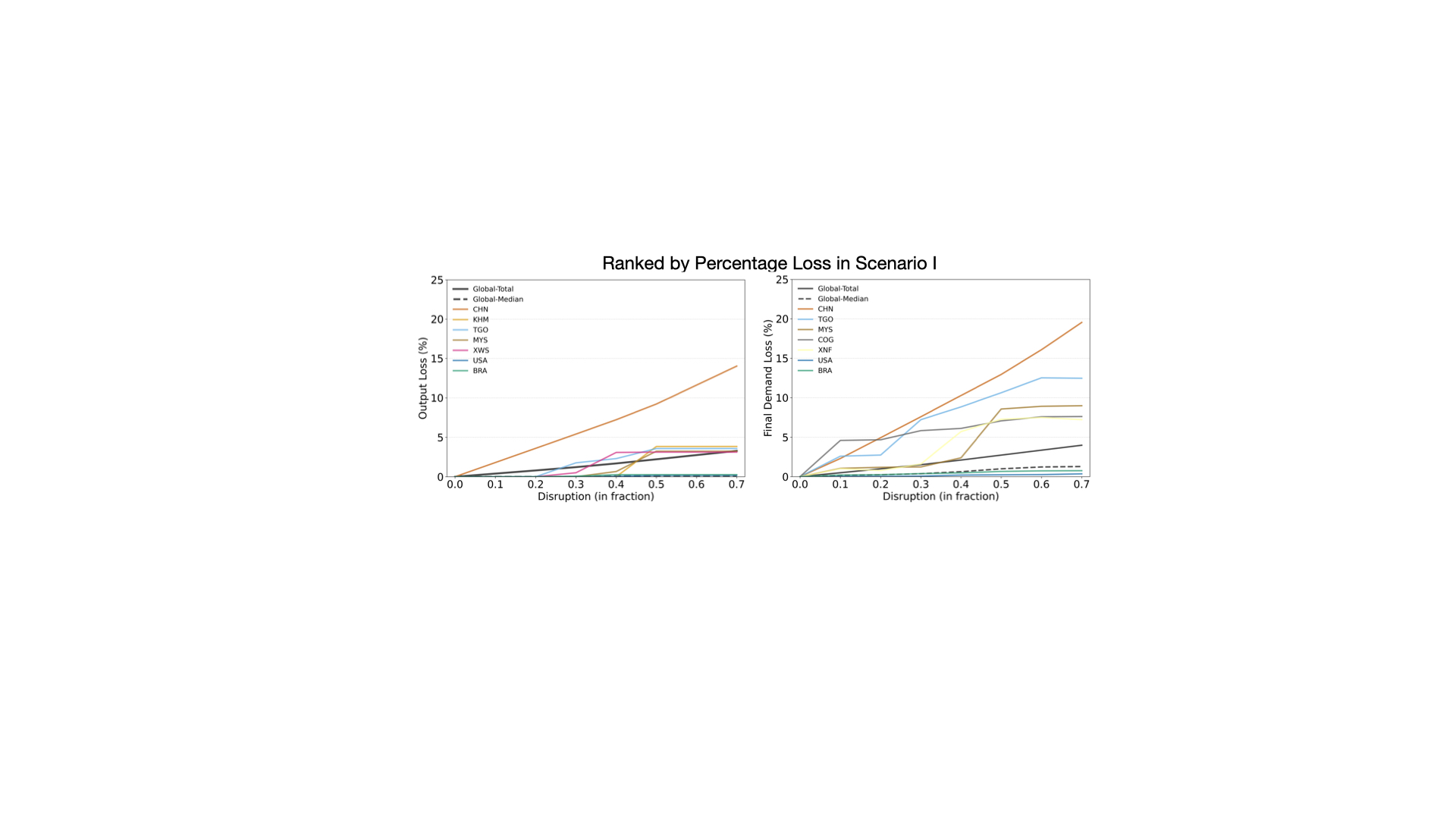}
\caption{\textbf{Progressive cascading impact of USA--China Oilseed export disruption.} While China dominates in output losses, several smaller economies experience disproportionately high final demand losses, indicating heterogeneous exposure to downstream supply-chain effects.}
\label{fig:general_trend_2023_disruption}

\end{figure}

\begin{figure}[t!]
    \centering
    \scriptsize

    \includegraphics[width = \linewidth, clip, trim = 0cm 10cm 0cm 8cm]{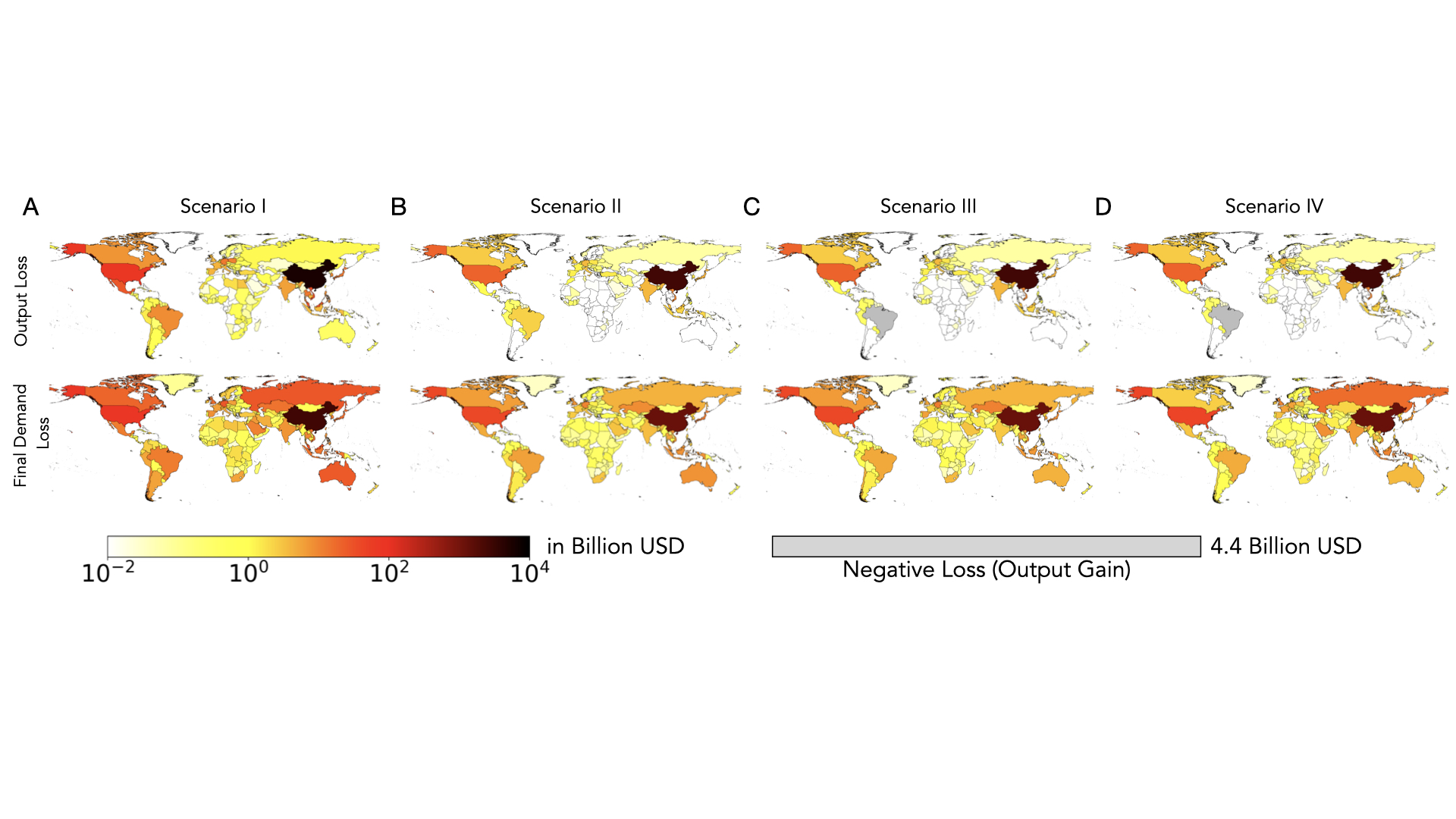}

    \caption{\textbf{Cascading global impacts of USA-China oilseed supply disruption and mitigation scenarios.} Country-level impacts are measured as output loss and final demand loss with respect to the baseline values. In Scenario III and IV, Brazil (shaded in gray) experiences a net negative output loss due to increased oilseeds production.}

    \label{fig:global_impacts}
\end{figure}

\subsection{Impact of Disrupting USA--China Oilseed Trade}

We begin by analyzing the impact of a trade disruption under a baseline scenario in which no endogenous mitigation mechanisms are permitted. In this setting, disrupted trade flows are introduced exogenously by constraining USA oilseed exports to China by a $\Delta$-fraction, while preventing firms from reallocating intermediate inputs across alternative suppliers or expanding production beyond baseline capacity. This scenario mirrors prior IO-based disruption analyses that examine shock propagation under fixed production and trade structures.

\begin{figure}[b!]
\centering
\includegraphics[width = \linewidth, clip, trim = 2cm 3cm 2cm 4cm]{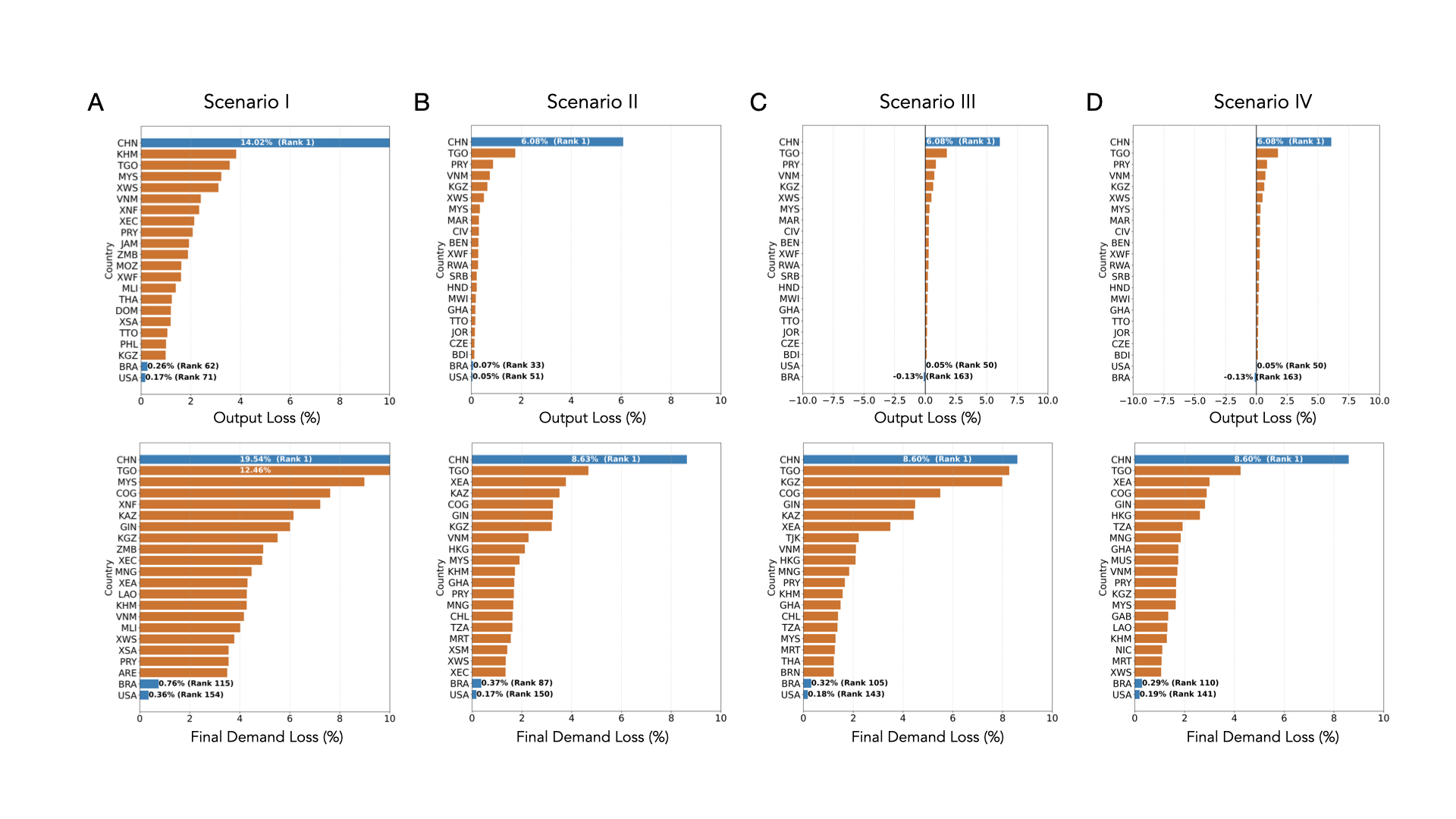}
\caption{\textbf{Percentage country-level impacts across disruption and mitigation scenarios.} Across scenarios, China consistently experiences the largest impact, while mitigation strategies progressively reduce both the intensity and breadth of global spillovers.}
\label{fig:countrywise_ranking_percentage}
\end{figure}

Specifically, we impose disruption parameters $\Delta_{i,j}>0$ on the affected trade links and set the reallocation parameters to $\alpha_{i,j}=1$ for pre-existing flows, while fixing production capacity parameters to $\beta_i=1$ for all firms. As a result, the supply chain responds to the disruption solely through reductions in feasible output, with unmet demand absorbed by the slack variable.

\begin{figure}[t!]
    \centering
    \includegraphics[width = \linewidth, clip, trim = 8cm 0 8cm 0]{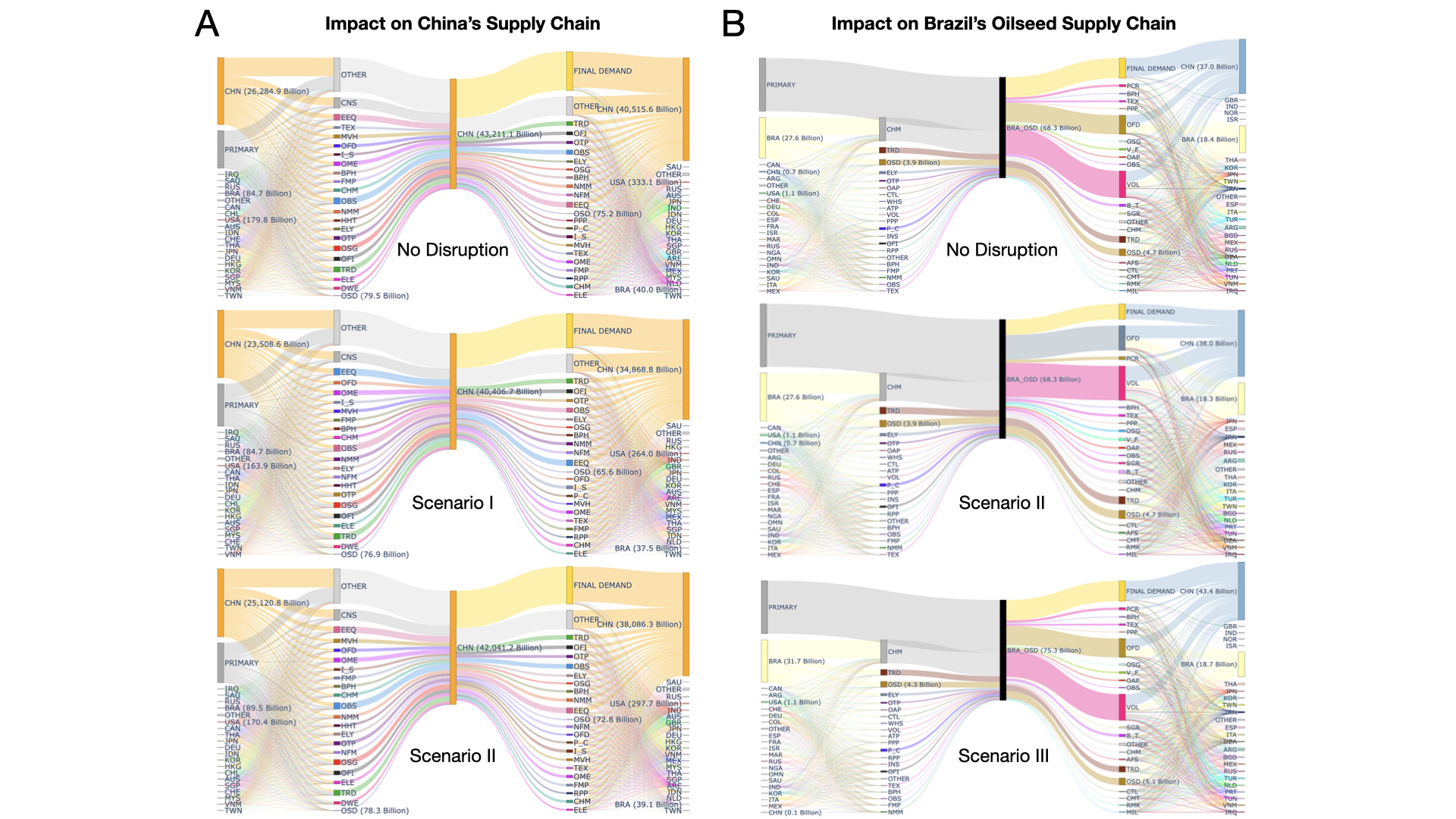}
    \caption{
    \textbf{Supply-chain flow reconfiguration under USA--China oilseed disruption and mitigation.} (A) China's supply chain flows under the baseline no disruption scenario, the disruption-only case (Scenario I), and mitigation through Brazilian oilseed reallocation to China (Scenario II). (B) Brazil's Oilseed supply chain flows under the baseline/no-mitigation case, reallocation to China (Scenario II), and domestic production expansion (Scenario III). Node and link widths are proportional to flow values, with smaller countries and sectors aggregated into ``Other.'' 
}
    \label{fig:sankey_main}
\end{figure}

Figure~\ref{fig:general_trend_2023_disruption} summarizes the country-level impacts of disrupting USA oilseed exports to China under Scenario I. Across all four panels, output and final demand losses increase with the disruption fraction, but the magnitude and timing of these effects vary substantially across economies. The percentage-loss panels reveal strong heterogeneity: while the global median remains low, a small subset of countries experiences much sharper losses, indicating that the aggregate impact is unevenly distributed. China exhibits the steepest increase in absolute losses, consistent with its role as the primary downstream destination of the disrupted flow and therefore the economy most directly exposed to the shock. In contrast, global percentage losses grow more gradually, suggesting that the broader system partially absorbs the disruption through indirect trade adjustment and final-demand reallocation.

A notable pattern emerges in the percentage final-demand-loss panel: at lower disruption levels, several smaller African economies experience disproportionately high relative losses, in some cases exceeding China’s percentage loss. Although their absolute losses remain much smaller, this suggests heightened vulnerability to downstream spillovers through reliance on Chinese firms and sectors affected by the oilseed disruption. By contrast, the median country experiences only modest losses, indicating that severe relative impacts are concentrated among a limited set of vulnerable economies. The figure also shows that final demand losses rise earlier, while stronger output cascades emerge only at higher disruption levels. This suggests that demand-side effects are more immediately sensitive to the shock, whereas production-side cascades require larger disruptions before propagating through intermediate supply chains. Overall, the results reveal nonlinear amplification with disruption severity, a strongly asymmetric burden centered on China, and substantial relative exposure among smaller economies despite lower absolute losses.

In absolute terms, China experiences by far the largest losses, with significant spillovers to major trade-exposed economies such as the United States, Japan, Germany, and several Southeast Asian countries, as shown in Figure~\ref{fig:global_impacts}. Complementing this aggregate view, Figures~\ref{fig:countrywise_ranking_percentage} and~\ref{fig:countrywise_ranking_absolute} report the top 20 regions ranked by percentage and absolute output and final demand losses, respectively.

At the sectoral level, the disruption generates concentrated losses in oilseed-related and downstream production sectors, with China accounting for the dominant share of both output and final demand losses. The oilseed sector (highlighted by red box) is directly affected, but the impacts extend beyond to related sectors, as illustrated in Figures~\ref{fig:sectorwise_ranking_percentage} and ~\ref{fig:sectorwise_ranking_absolute}. In percentage terms, several sectors exhibit sizable losses despite relatively small contributions from other countries, indicating that the shock remains strongly centered on China but propagates unevenly through sectoral linkages. Final demand losses are concentrated in a smaller set of sectors, with some sectors showing large contributions from ``Others,'' suggesting that downstream impacts spill over more broadly across the global economy. 

This baseline scenario serves as a reference point for evaluating mitigation mechanisms. In the following sections, we examine how endogenous trade reallocation and production expansion alter these outcomes, and whether such mechanisms reduce cascading losses while introducing new pressures elsewhere in the system.

\subsection{Redirecting Brazil's Oilseeds to reduce the cascading impacts}\label{sec:Scenraio2}

\begin{figure}[b!]
    \centering
    \includegraphics[width=\linewidth, clip, trim = 0cm 18cm 0cm 2cm]{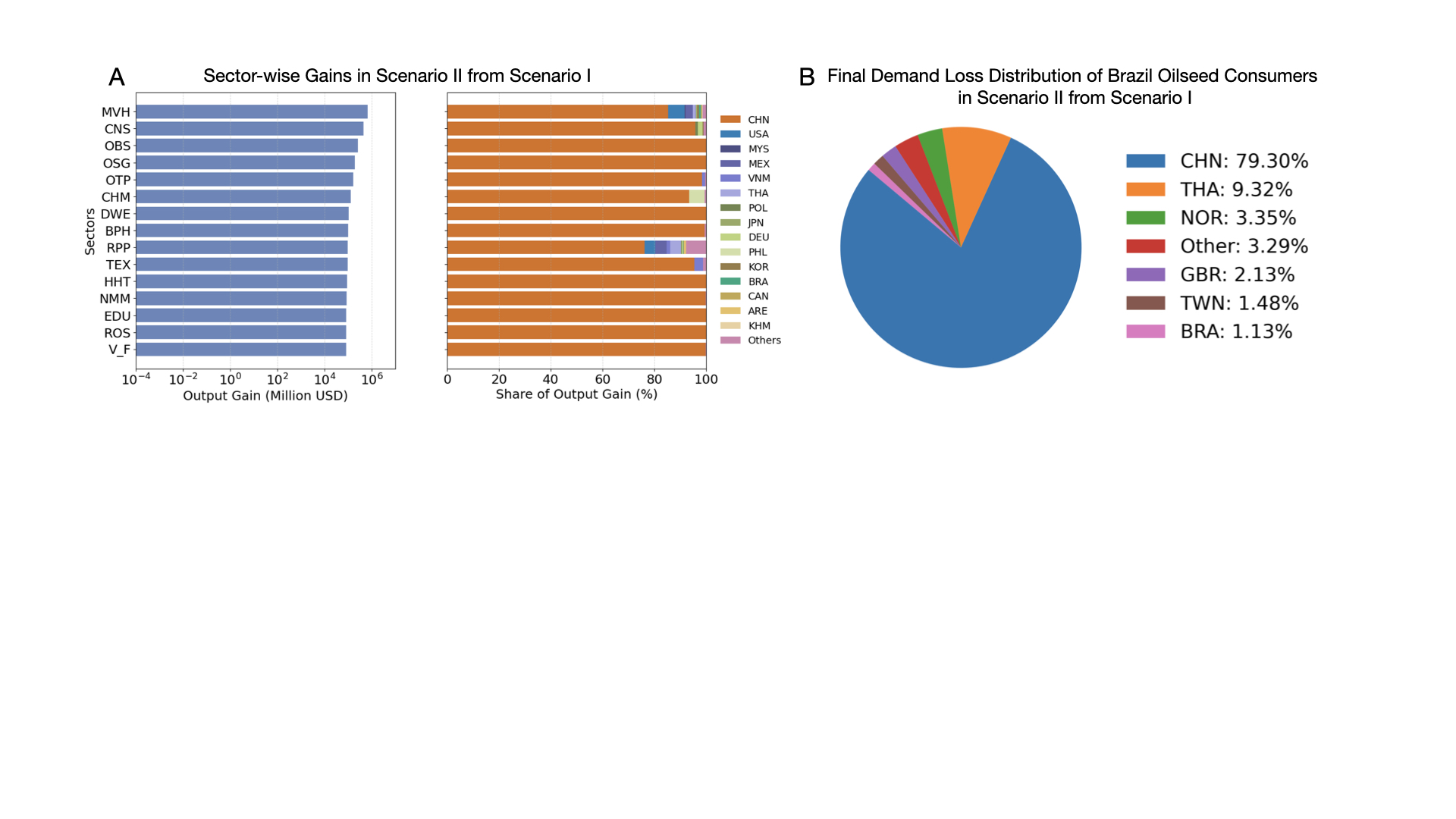}
    \caption{\textbf{Reallocation of Brazil's oilseed exports to China under Scenario II relative to Scenario I.} (A) Top 13 sectoral gains in absolute terms and corresponding producer-region shares induced by reallocation. (B) Distribution of reallocation-induced final demand losses among countries importing oilseeds from Brazil, showing that China absorbs the majority of the indirect impact, followed by Thailand and Norway.}
    \label{fig:scenario2_wrt_1}
\end{figure}

As a mitigation strategy, we examine whether reallocating oilseed exports from an alternative major producer can reduce the cascading impacts of the USA--China oilseed disruption. We focus on Brazil, the second-largest exporter of oilseeds to China, as shown in Figure~\ref{fig:USA_OSD_flows}. To model this counterfactual, we increase the permitted flows from Brazil's OSD sector to China by an additive $\alpha$ factor, while all other flows remain capped at their baseline MRIO values ($\alpha=1$). This isolates the extent to which a single major oilseed exporter can offset the loss of U.S. supply through reallocation alone, while also capturing potential losses imposed on other global consumers of Brazilian oilseeds as supply is redirected toward China.

Figure~\ref{fig:global_impacts}(B) and Figure~\ref{fig:countrywise_ranking_percentage}(B) illustrate the LP solution with a $20\%$ reallocation of Brazilian oilseed exports to China under a $70\%$ disruption of USA oilseed exports to China. We observe that losses are markedly reduced and become less widespread compared with the disruption-only case, with percentage losses falling for most countries and the largest mitigation observed for China. The set of severely affected countries also narrows, indicating that reallocated Brazilian supply alleviates the input bottleneck created by the U.S. shock to Chinese sectors. At the same time, final demand losses do not vanish entirely, reflecting the LP's trade-off structure: final demand flows are redirected to relieve output constraints, leading to smaller but still concentrated residual impacts. Interestingly, Brazil also experiences a reduction in net loss, from $0.76\%$ to $0.37\%$, suggesting that this reallocation can be mutually beneficial in aggregate.

At the sectoral level, Figure~\ref{fig:scenario2_wrt_1}(A) shows that reallocation generates substantial output recovery across most sectors relative to the disruption-only scenario, reducing cascading impacts below the levels induced by the oilseed shock. The largest gains are concentrated in Chinese sectors, including the OSD sector, while spillover gains to other regions remain below $20\%$. Similar country-level patterns appear for final demand gains, although the OSD sector does not appear among the top affected final-demand sectors. The largest recovery is observed in China's motor vehicles and parts sector (MVH), at approximately $70\%$, consistent with MVH dropping out of the top disrupted sectors in Figure~\ref{fig:sectorwise_ranking_percentage}(B). Exceptions include sectors such as textiles (TEX) and rubber and plastics (RPP), where residual losses remain more persistent.

Finally, Figure~\ref{fig:scenario2_wrt_1}(B) highlights the redistribution of final-demand losses among countries that consume Brazilian oilseeds. China absorbs the majority of the reallocation-induced final-demand impact, accounting for nearly $80\%$ of the loss, followed by Thailand and Norway. This indicates that while Brazilian reallocation substantially mitigates cascading losses from the original U.S. disruption, it also shifts part of the burden onto countries connected to Brazil's oilseed demand network.

\subsection{Increasing Brazil's Oilseed Production through Domestic Reallocation}
\label{sec:scenario3}

Establishing new international trade agreements often requires navigating complex geopolitical constraints and existing treaties, making such adjustments time-consuming and uncertain. To assess more immediate mitigation strategies, we next analyze interventions within Brazil's domestic economy.

Specifically, we examine the impact of increasing Brazil's oilseed production through internal reallocation of resources. To support this expansion, domestic Brazilian firms are allowed to redirect supply toward Brazil's oilseed sector, enabling a targeted production increase of $\eta = 0.10$. In addition, we assume a proportional expansion in primary inputs available to the oilseed sector, with $\beta = 0.10$. This setup isolates how domestic adjustments affect Brazil's production capacity and the broader global supply chain.

Figure~\ref{fig:global_impacts}(C) illustrates the LP solution when Brazil's oilseed production is increased by $10\%$ through domestic reallocation. We observe negative output-loss values, shaded in gray, indicating a net increase in the total value of goods and services produced relative to the non-disrupted baseline. This corresponds to a net gain of $0.13\%$ for Brazil, as shown in Figure~\ref{fig:countrywise_ranking_percentage}(C). However, the increase in Brazilian oilseed production also raises percentage final-demand losses for some smaller economies, suggesting that domestic reallocation places additional pressure on countries indirectly connected to Brazil's redirected supply flows. In contrast, most large economies show minimal changes in absolute output losses relative to Scenario II. Final-demand losses, however, are redistributed across larger economies, with many experiencing a reduction in losses, as illustrated in Figure~\ref{fig:sectorwise_ranking_absolute}.

At the sectoral level, Figure~\ref{fig:sectorwise_domestic_absolute} shows that increasing Brazil's oilseed production produces a mixed redistribution of losses and gains relative to Scenario II. Final demand losses are distributed acccross sector with major disruptions to chemicals (CHM), vegetable oils and fats (VOL), trade (TRD), and textiles (TEX) sectors. These losses are largely associated with Brazil redirected domestic inputs, which not only impacts Brazil's final demand but indirectly impact other regional production. At the same time, the right panels show sizable final demand gains in oilseeds (OSD) and related sectors, with Brazil accounting for a dominant share of oilseed gains. Overall, the sectoral patterns suggest that domestic production expansion can mitigate the original international disruption, but it does so by reallocating pressure across Brazil's internal production network and its downstream trade partners.

\begin{figure}[t!]
    \centering
    \includegraphics[width = \linewidth, clip, trim = 0cm 11cm 0cm 9cm]{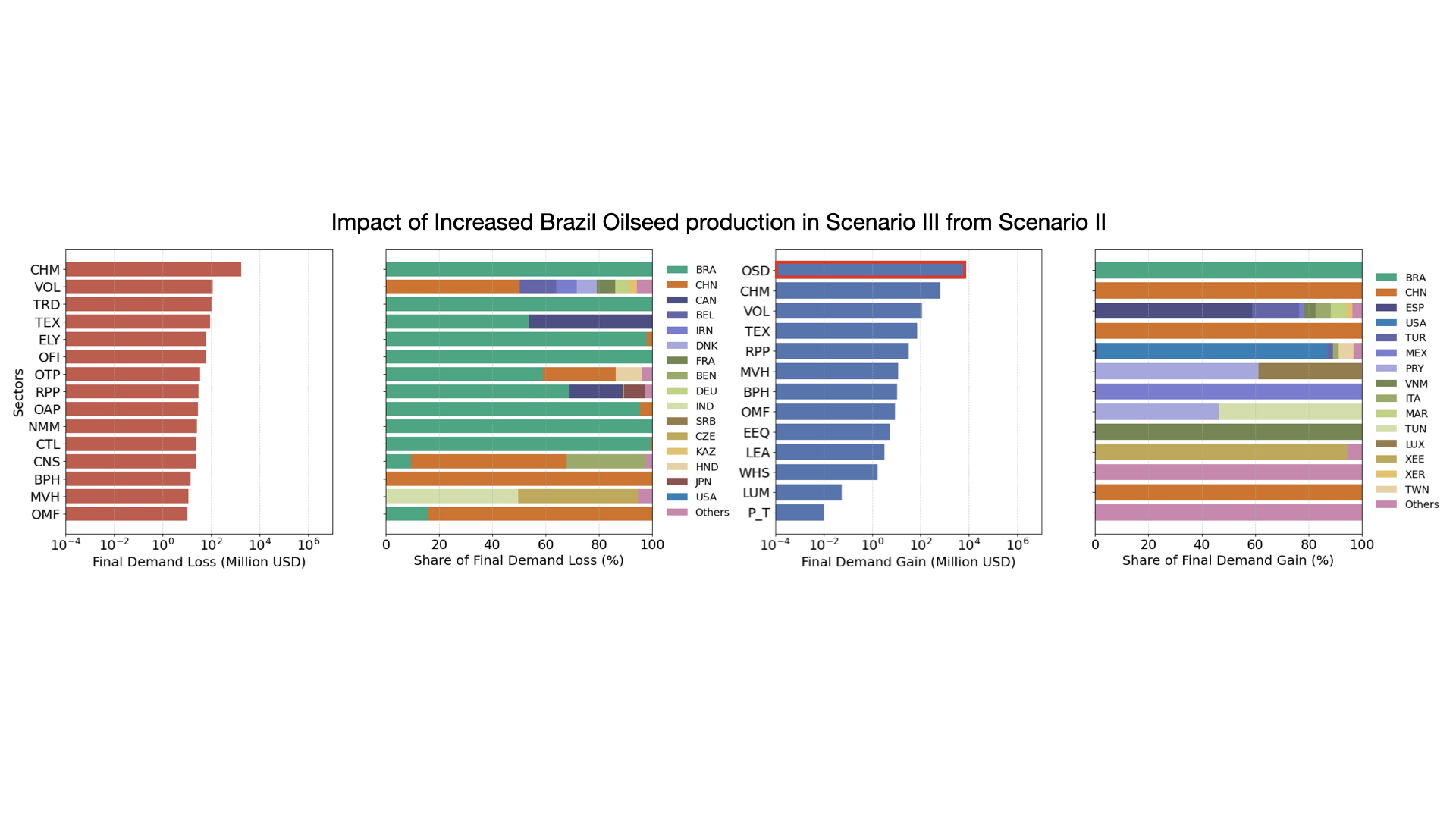}
    \caption{\textbf{Absolute sector-level impact of increasing Brazil's oilseed production under Scenario III relative to Scenario II.} The panels show the top impacted sectors and their corresponding producer region shares.}
    \label{fig:sectorwise_domestic_absolute}
\end{figure}

\subsection{Expanding Brazil's Oilseed Production through Global Partnerships}

As observed earlier, domestic reallocation to increase Brazil's oilseed production disproportionately affects smaller economies across Africa and Asia. As an alternative mitigation strategy, we consider a counterfactual scenario in which Brazil's oilseed sector can draw the required inputs from international trade partners, rather than relying only on domestic sectors. To support this expansion, all firms are allowed to redirect supply toward Brazil's oilseed sector, enabling a targeted production increase of $\eta = 0.10$, along with a proportional expansion in primary inputs, $\beta = 0.10$.

Figure~\ref{fig:countrywise_ranking_absolute}(D) shows that Scenario IV produces country-level output losses that are nearly identical to Scenario III, indicating that global partnership-based production expansion provides only marginal improvement beyond Brazil's domestic reallocation strategy in terms of aggregate output. The main difference appears in the redistribution of absolute final demand losses: the ranking of affected economies shifts, but the overall magnitude remains largely unchanged, except for Russia and Kazakhstan. Russia becomes the third-largest affected economy in Scenario IV, while Kazakhstan drops out of the top 20 after being among the most affected countries in Scenario III. Since absolute losses are naturally skewed toward larger economies, Figure~\ref{fig:countrywise_ranking_percentage} provides a complementary view, showing that several smaller economies experience larger percentage recoveries under Scenario IV. This suggests that diversifying input sources provides particularly strong relative benefits for smaller economies, even when aggregate global losses change only modestly.

\begin{figure}[t!]
\centering
\includegraphics[width = \linewidth, clip, trim = 0cm 12cm 0cm 9cm]{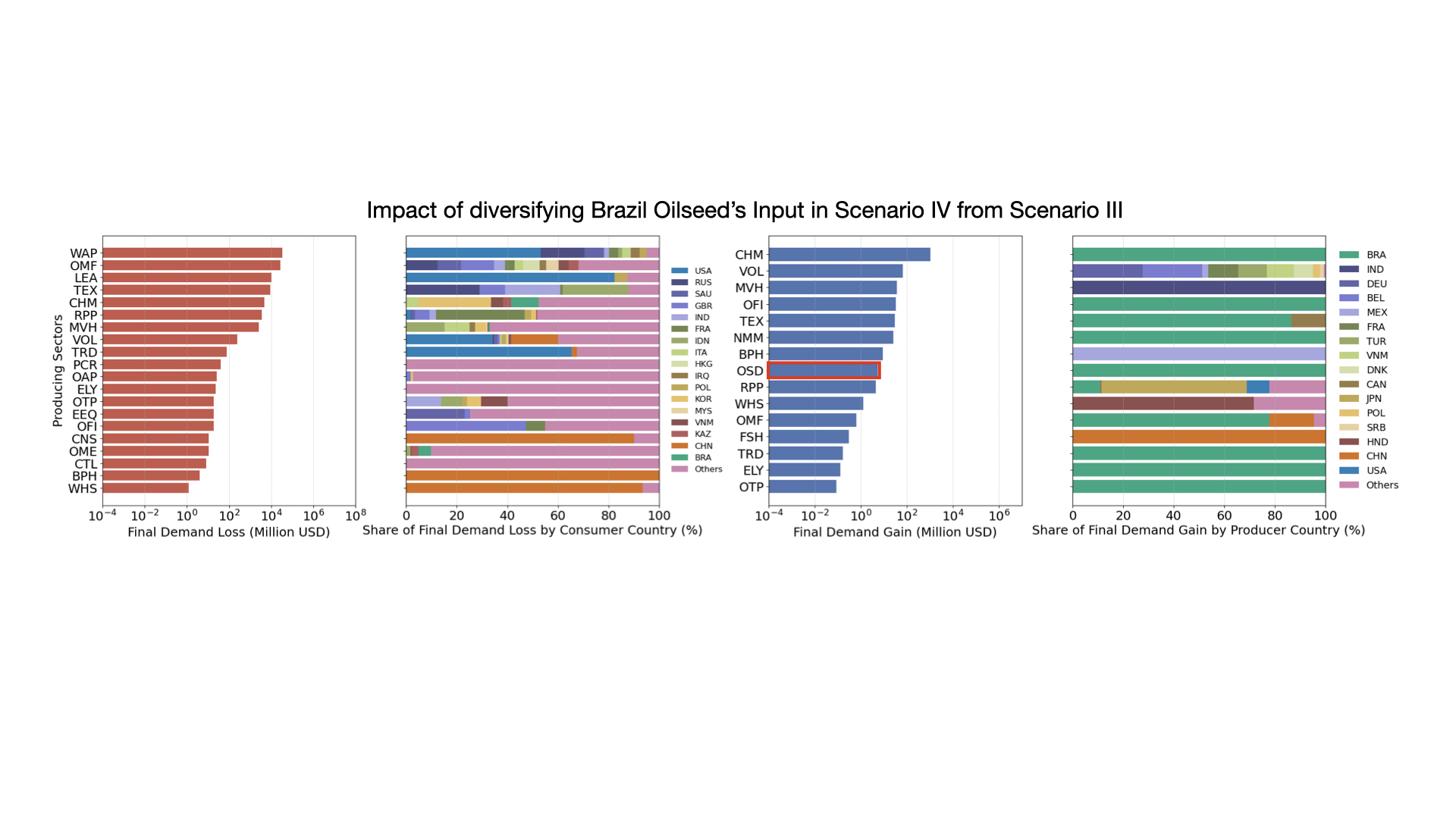}
\caption{\textbf{Absolute sector-level impact of diversifying Brazil's oilseed inputs under Scenario IV relative to Scenario III.} The panels report impacted sectors and their corresponding consumer- and producer-country shares.}
\label{fig:sectorwise_global_absolute}
\end{figure}

While the global sector-level aggregate loss remains largely unchanged between Scenarios III and IV, as illustrated in Figure~\ref{fig:sectorwise_ranking_percentage}, Figure~\ref{fig:sectorwise_global_absolute} highlights the relative redistribution induced by diversifying Brazil's oilseed inputs. Compared with the domestic reallocation strategy in Scenario III, Scenario IV spreads input pressure across a broader set of producer countries, generating final demand gains in oilseed and related sectors while reducing Brazil's exclusive burden of supporting expanded oilseed production. The country-share panels show that losses are no longer concentrated only within Brazil's domestic economy, but are distributed across a wider set of trade partners. At the same time, the final-demand gain panel shows that Brazil captures the largest share of gains across most sectors, indicating that global input diversification primarily strengthens Brazil's expanded oilseed production while reallocating residual adjustment costs across the international supply network.

\section{Conclusion}

We developed a linear programming framework for analyzing cascading supply-chain disruptions under trade shocks. The framework extends standard IO-based disruption analysis by jointly modeling disruption propagation, trade reallocation, production expansion, and capacity constraints within a unified equilibrium formulation. This enables counterfactual analysis of how disrupted flows are absorbed when substitution is constrained by the broader production network.

We applied the framework to USA-China oilseed trade disruptions using GTAP MRIO data. The results show that disruptions to U.S. oilseed exports to China generate uneven impacts across countries and sectors: China bears the largest absolute losses, while several smaller economies experience high relative final-demand losses through downstream exposure. Thus, trade shocks propagate beyond the directly affected bilateral partners through intermediate production and final-demand channels.

We also examined mitigation through Brazilian oilseed reallocation and production expansion. Reallocating Brazilian exports to China substantially reduces cascading losses, but shifts pressure onto countries and sectors linked to Brazil's oilseed demand network. Increasing Brazilian production further offsets the disruption, while relying on global inputs spreads adjustment costs more broadly, with modest aggregate gains relative to domestic expansion.

Together, these findings highlight a central trade-off in global supply-chain resilience: trade reallocation and production expansion can reduce aggregate losses, but their effectiveness depends on spare capacity, existing trade relationships, and where adjustment costs are absorbed by final demand. Trade diversion is therefore conditional rather than automatic; under constrained substitution and production capacity, mitigation can reduce cascading losses while creating localized resource stress.

\bibliographystyle{unsrt}
\bibliography{references}

\appendix
\clearpage 

\begin{center}
{\Large\textbf{Appendix}}
\end{center}

\section{Background \& Related Work}

In this section we discuss the relevant background and techniques used in our work while putting  it in  context with the existing literature.

\smallskip 

\noindent \textbf{Input–Output (IO) System.} At its core, IO analysis combines an economic model of production with empirically constructed IO tables that record the flows of goods and services among industrial sectors and geographical regions. We describe these below. 

\smallskip

\noindent The classical Leontief input–output (IO) model \cite{leontief1986input} characterizes the global supply chain as an interdependent production network, where the output of one sector serves as an essential input to others. The model is defined by two key components:

First, the \textit{Leontief production function} constrains a firm’s output by the input that is least available relative to its required input coefficients. Let $X$ denote the output of the firm, and let $S_k$ be the available intermediate input from sector $k$ with input coefficient $A_k$, and $P_0$ and $B_0$ denote the available primary input and its corresponding coefficient. Then, 
 \begin{align}\label{eq:production_function}
    \textstyle X = 
    \min \left(
        \frac{S_{0}}{A_{0}},
        \frac{S_{1}}{A_{1}},
        \dots,
        \frac{P_{0}}{B_{0}}
    \right).
\end{align}

Second, the \textit{IO equilibrium} condition requires that the total outflow of a firm, which includes intermediate supplies to other firms and final demand, does not exceed the firm’s production level. This ensures consistency between feasible production and feasible flow of goods across the supply chain network.

\smallskip

\noindent Next, key ingredient of IO Analysis are the \textit{Input-Output Tables} used to record the flow of economic activities among different sectors of the supply chain. They are developed empirically for economic research and structural analysis at the global scale, highlighting inter-industrial relationships covering various sectors of the economy. In this work, we use a \textit{Multi-Region Input-Output} (MRIO) table from \cite{peters2011constructing}, as the basis for our analysis. The system consists of firms belonging to various sectors, $\mathcal{S}$ in the global economy consisting of multiple regions, $\mathcal{R}$. The MRIO table provides intermediate flow values (in millions of USD) among various region-sector firm pairs, the primary inputs consumed by each firm and the household demands from each firm in the system. Formally, suppose $\mathcal{F}$ is the set of firms in the system, then let $Z_{i,j}^*$ denote the intermediate flow from firm $i$ to firm $j$, $H_{i}$ denote the total value of household demand from firm $i$ and $P_i$ denote the value of primary inputs consumption of firm $i$. Then, for a firm $i$, we can estimate the output, denoted by $X_i^*$, the sector-wise Input coefficients for each sector $s$, $A_{s,i}$ and Primary input coefficients, $B_i$ for the Leontief production function as:
\begin{align}\label{eq:ss_output}
    X_i^* &= \textstyle \sum_{j\in \mathcal{R} \times \mathcal{S}} Z^*_{i,j} + H_i\,; \quad A_{s,i} \; = \; \frac{\sum_{j \in \mathcal{R} \times \{s\}}Z^*_{j,i}}{X_i^*}, \quad \text{and} \quad  B_{i} \; = \; \frac{P_i}{X_i^*},
\end{align}
where $\mathcal{R} \times \{s\}$ denotes the set of firms in the system associated with sector $s$ across all regions. These estimated coefficients are crucial in capturing the Leontief production rule for each firm.

Together, these components form the foundation of IO analysis for studying global supply chains. 
Miller and Blair \cite{miller2009input} provide a comprehensive treatment of the development and applications of IO models. In this work, we study the impact and mitigation strategies for tariff-induced disruptions.

\smallskip
\noindent \textbf{Cascading Disruption in Global Supply Chains.} 
Due to the highly interconnected nature of modern production networks, localized disruptions can propagate through intermediate input dependencies and give rise to cascading effects across regions and sectors. The study of cascading failures in supply chains has therefore received significant attention in operations research, economics, and network science. Comprehensive surveys of both classical and recent work can be found in \cite{snyder2016or,korder2024simulation}.

Recent studies have increasingly relied on simulation-based approaches to examine disruption propagation. For example, Guan et al.~\cite{guan2020global} study the cascading impacts of COVID-19 lockdowns on global supply chains using an agent-based model (ABM) built on an IO representation. Other works explore supply-chain robustness in the presence of cyber-physical coupling~\cite{mu2021robustness}, resilience of ports in shipping network~\cite{cao2025data}, and resilience of major cities to food-supply shocks \cite{gomez2021supply} using ABMs and network-based simulations that incorporate firm-level heterogeneity and behavioral adaptation. These models provide valuable insights into localized dynamics and scenario-specific responses. 

However, ABM-based approaches typically require extensive calibration and rely on behavioral assumptions governing agent responses to shocks. Moreover, their computational complexity limits scalability, making it challenging to apply them to economy-wide systems involving millions of inter-firm flows. As a result, such models do not naturally yield global equilibrium characterizations or transparent system-level guarantees, which are essential for analyzing large-scale trade policy interventions.

\medskip
\noindent \textbf{Modeling Supply Chains Using Linear Programs.} 
An alternative and complementary line of work formulates supply-chain dynamics as optimization problems, enabling system-level analysis under explicit feasibility and capacity constraints (See~\cite{eiselt2007linear} for a detailed overview). Linear programming (LP) formulations have been widely used to study network flows problem~\cite{bazaraa2011linear}, resource allocation problems~\cite{panik2018linear}, network interdiction problems~\cite{woodruff2006network} and supply-chain optimization~\cite{spitter2005linear}.

In the context of IO systems, LP formulations has become a textbook technique for supply-chain optimization~\cite{vogstad2009input}. Given an objective function, LP models enable choosing the level of activities to optimize, satisfying the sectoral level relations imposed by IO analysis. These yield equilibrium states of the production network without relying on behavioral assumptions, which is unlike simulation-based approaches. Existing works employ these techniques to study energy-environment tradeoffs in the global economy~\cite{oliveira2016coupling}, and impact of disruptions to interconnected economies~\cite{santos2006inoperability}. Recent work by Soltanisehat et al. ~\cite{soltanisehat2024multiregional} proposes a novel multiobjective mixed-integer linear programming formulation for optimizing lockdowns policy for the US economy during COVID-19. In this work we ask similar questions for modeling the supply-chain disruptions induced due to changing geopolitical dynamics in the form of tariff imposition. 

Specifically, we build on these optimization-based perspectives and formulate the MRIO system as a linear program to analyze the cascading impact of tariff-induced disruptions. Our approach enables the endogenous reallocation of trade flows and adjustment of production levels in response to shocks, allowing us to study both the immediate propagation of disruptions and the mitigating role of alternative trade pathways. This framework is particularly well suited for global supply-chain analysis, where policy-induced shocks generate competition for scarce inputs across regions and sectors and require coordinated reallocation across the entire production network. 

\smallskip

\noindent\textbf{Solving the IO model using Linear Programming.} We first describe a linear programming (LP) approach for solving the IO model~\cite{vogstad2009input}.
We later extend this to handle different kinds of disruptions and relevant mitigation strategies.

We treat each region–sector pair as a distinct firm $i \in \mathcal{F}$ and captures how production levels, input requirements, and reallocation constraints
interact under both normal and disrupted conditions. 
For each firm $i$, the variable $x_i \ge 0$ denotes its total output, while $z_{i,j} \ge 0$ represents the intermediate flow supplied from firm $i$ to firm $j$. The LP has the following structure.

\begin{eqnarray}
\nonumber
\max && \sum\nolimits_{i \in \mathcal{F}} x_i, \mbox{ subject to}\\
A_{s,i} x_i 
        &\leq& 
\sum\nolimits_{\substack{j \in \mathcal{F}: \sigma(j)=s}} z_{j,i}, \ \forall s \in \mathcal{S}, \forall i\in\mathcal{F} \label{eqn:leontief1}\\
B_i x_i &\le& P_i, \forall i\in\mathcal{F} \label{eqn:leontief2}\\
z_{i,j} 
        &\le& Z^*_{i,j}, \forall i, j\in\mathcal{F} \label{eqn:cap}\\
\sum_{j\in\mathcal{F}}z_{i,j} + H_i 
        &\le& x_i, \forall i\in\mathcal{F} \label{eqn:cap_demand_io}\\
\nonumber
x_i, z_{i,j} &\geq& 0, \forall i, j
\end{eqnarray}

The constraints (\ref{eqn:leontief1}) and (\ref{eqn:leontief2}) capture the Leontief production constraints; Observation \ref{obs:baseLP} shows that the maximization objective ensures that the optimum LP solution satisfies the Leontief constraints exactly.

\begin{observation}
\label{obs:baseLP}
The optimal solution $x^*, z^*$ to the above LP satisfies the IO model.
The MRIO table is a feasible solution to the LP.
\end{observation}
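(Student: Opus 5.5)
The statement has two parts, and I will prove them separately: (a) the LP optimum realizes the Leontief IO model, and (b) the MRIO table gives a feasible point.

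\textbf{Feasibility of the MRIO table.} I will plug in $x_i = X_i^*$ and $z_{i,j} = Z^*_{i,j}$ and check each constraint in turn.

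Constraint (\ref{eqn:cap}) holds with equality.

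Constraint (\ref{eqn:cap_demand_io}) holds with equality. This is exactly the definition $X_i^* = \sum_j Z^*_{i,j} + H_i$ from (\ref{eq:ss_output}).

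For (\ref{eqn:leontief1}), the definition of $A_{s,i}$ gives
\[
A_{s,i} X_i^* = \sum_{j:\sigma(j)=s} Z^*_{j,i},
\]
so it also holds with equality.

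For (\ref{eqn:leontief2}), we have $B_i X_i^* = P_i$ by definition of $B_i$.

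Nonnegativity follows from nonnegativity of the table entries. One degenerate case needs care: firms with $X_i^*=0$. For these I will adopt the convention that the coefficients are $0$, or else that such firms are removed. This step is routine.

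\textbf{The optimum satisfies the IO model.} Let $(x^*,z^*)$ be optimal. The IO equilibrium condition, that outflow does not exceed output, is constraint (\ref{eqn:cap_demand_io}) itself, so it holds automatically.

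The real content is the Leontief production rule (\ref{eq:production_function}). It requires
\[
x_i^* = \min\Big(\min_s \tfrac{1}{A_{s,i}}\sum_{\sigma(j)=s} z^*_{j,i},\; \tfrac{P_i}{B_i}\Big).
\]
Constraints (\ref{eqn:leontief1}) and (\ref{eqn:leontief2}) give the inequality $\le$ directly. For equality, the natural argument is by contradiction: if $x_i^*$ were strictly below the min, increase $x_i$ by a small $\epsilon>0$. The Leontief constraints for firm $i$ stay satisfied, and (\ref{eqn:cap_demand_io}) for firm $i$ only gets looser. No other firm's constraint involves $x_i$. The objective therefore strictly increases, which contradicts optimality.

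\textbf{Main obstacle.} The perturbation argument only shows that at an optimum, for each $i$, either the Leontief min is attained or nothing blocks $x_i$. Since $x_i$ appears only in firm $i$'s own constraints, the argument above in fact always succeeds, so at every optimum $x_i^*$ equals the min.

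The subtle point is the interpretation of the inputs $z^*_{j,i}$. The optimum may ship more inputs than are used, or leave the outflow inequality slack. I will argue that the IO model (as stated in the background) only requires outflows to be at most production, and inputs to be at least the Leontief requirement, so slack flows are permitted.

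I will also briefly note boundedness. The LP has an optimum because (\ref{eqn:leontief2}) bounds each $x_i$ by $P_i/B_i$ and (\ref{eqn:cap}) bounds each $z$, and feasibility is supplied by the MRIO point from the first part. I expect this exact-attainment step, together with the handling of zero coefficients in the min (where $A_{s,i}=0$ terms are simply dropped), to be the only place needing care.
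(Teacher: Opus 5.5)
Your proof is correct and is essentially the paper's argument. The Leontief equality follows because a firm with $x_i^*$ strictly below the minimum could raise $x_i$; you usefully make explicit that $x_i$ appears only in firm $i$'s own constraints and that (\ref{eqn:cap_demand_io}) only loosens. Your constraint-by-constraint check of the MRIO point is a fuller version of the paper's one-line appeal to (\ref{eqn:cap}).

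One small caution on your extra boundedness remark: the bound $x_i\le P_i/B_i$ needs $B_i>0$. Zeroing all coefficients of a firm with $X_i^*=0$ would leave that $x_i$ unbounded and make the LP unbounded, so use the removal option instead.
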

\begin{proof}
Since the variables are all non negative,
the maximization objective ensures that 
\[
x^*_i = \min\{\min_{j \in \mathcal{F}}\{ \sum\nolimits_{ \sigma(j)=s} z^*_{j,i}/A_{s,i}\}, P_i/B_i\};
\]
if $x^*_i$ is strictly less than the minimum, it can be increased, which would violate optimality.
This corresponds to the Leontief constraints, and  implies that the LP solution satisfies the IO model.

Further, the constraint $z^*_{ij}\leq Z^*_{ij}$ ensures that the MRIO table is a feasible solution.
\end{proof}

\newpage
\section{Additional Empirical Details}

\subsection{Overview of GTAP-MRIO.}

The GTAPv12 database captures trade flows among 163 economic regions (corresponding to set $\mathcal{R}$) and 65 sectors (corresponding to set $\mathcal{S}$) for each region. We construct the multi-region input–output (MRIO) table from the GTAPv12 (2023) release following the methodology of \cite{peters2011constructing,zheng2021chinese}. Consistent with our modeling approach in Section~\ref{sec:method}, each \emph{region–sector} pair defines a firm, and each region has an associated single representative household. Additionally, each firm has a primary input capturing the economic flow relevant to labor, capital and natural resources directly used by the firm. 

The resulting GTAP-MRIO table consists of $10595$ firms, $6.87$ million intermediate flows between firms, $163 \times 10595$ demand flows to households, and $10595$ primary input flows to firms. The intermediate flows of the MRIO correspond to $Z^*$, the household flows to $H$ and the primary inputs to $P$ of the base model in Section \ref{sec:method}.

\smallskip

\noindent \textbf{Sparsifying the GTAP-MRIO} While the MRIO table provides a detailed representation of intermediate flows across the global supply chain, solving the associated linear program is computationally expensive due to the scale of the data. To mitigate this burden, we construct a sparsified version of the baseline GTAP-MRIO that preserves the most economically significant inter-firm dependencies while reducing the dimensionality of the problem.

\begin{table}[h!]
\centering
\scriptsize
\caption{Impact of value-based sparsification on the MRIO intermediate flow matrix.}
\resizebox{0.85\textwidth}{!}{
\begin{tabular}{|c|c|c|c|c|}
\hline
\textbf{Threshold, $\tau$} 
& \textbf{Aggregate(\%)} 
& \textbf{Mean $\pm$ Std (\%)} 
& \textbf{Median (\%)} 
& \textbf{Intermediate Flows} \\
\hline
$1\times10^{-6}$  & $5.622\times 10^{-6}$    & $0.007 \pm 0.09$   & $0.0001$     & 40,159,151 \\
$1\times10^{-4}$  & 0.0005    & $0.167 \pm 1.03$   & 0.010    & 20,190,035 \\
$\mathbf{1\times10^{-2}}$  & $\mathbf{0.0311}$    & $\mathbf{2.864 \pm 9.32}$   & $\mathbf{0.386}$   & $\mathbf{6,160,324}$  \\
$1$               & 0.8928    & $21.32 \pm 30.28$ & 7.490   & 872,148    \\
$10^{2}$          & 10.35   & $69.13 \pm 36.57$ & 100.0      & 53,713     \\
$10^{4}$          & 55.80   & $98.11 \pm 10.61$ & 100.0      & 1,089      \\
$10^{6}$          & 100.0  & $100.0 \pm 0.0$    & 100.0      & 0          \\
\hline
\end{tabular}
}
\label{tab:coefficients_tau}
\end{table}

\begin{figure}[h!]
    \centering
    \begin{minipage}{0.45\linewidth}
        \centering
        \includegraphics[width=\linewidth]{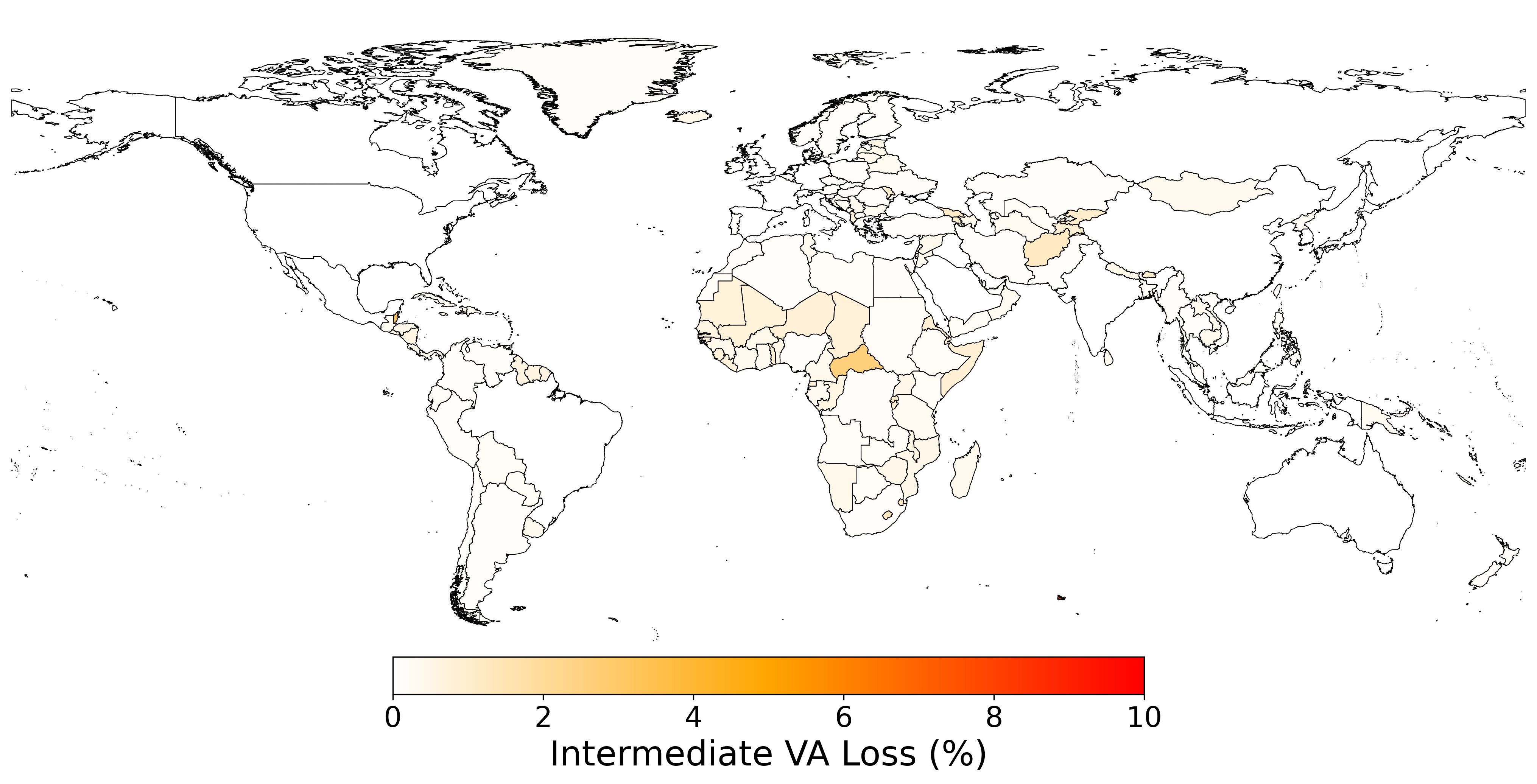}
    \end{minipage}
    \begin{minipage}{0.45\linewidth}
        \centering
        \includegraphics[width=\linewidth]{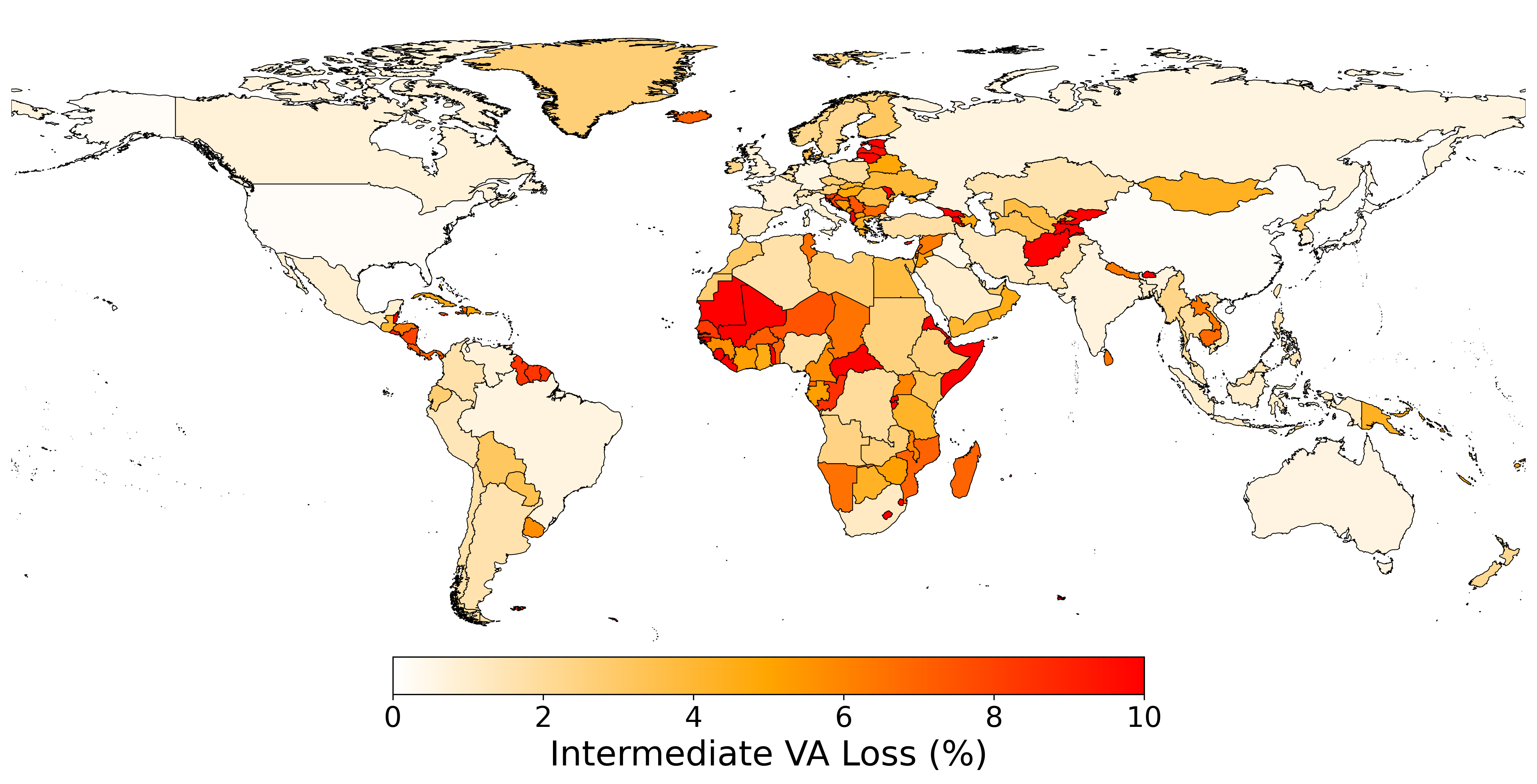}
    \end{minipage}
    \caption{Spatial distribution of global Value-Added loss at threshold (Left) $\tau = 0.01$ and (Right) $\tau = 1.0$.}
    \label{fig:MRIO_tau_heat}
\end{figure}

We evaluate the impact of sparsification for various threshold values of $\tau \in \{ 10^{-6},$ $10^{-4}, 10^{-2},10^{0},10^2,10^4, 10^6\}$ measured as the . We summarize the impact of sparsification on the MRIO table in Table~\ref{tab:coefficients_tau}, and visualize the global distribution of VA Loss in Fig.~\ref{fig:MRIO_tau_heat}.  We observe that as $\tau$ increases, more intermediate flows are removed, leading to rising value-added losses and, at high thresholds, near-systemic collapse. We set $\tau = 0.01$ for the remainder of this work.

\subsection{Supporting Empirical results}

\begin{figure}[h!]
    \centering
     \includegraphics[width = 0.9\linewidth, clip, trim = 2cm 2cm 2cm 4cm]{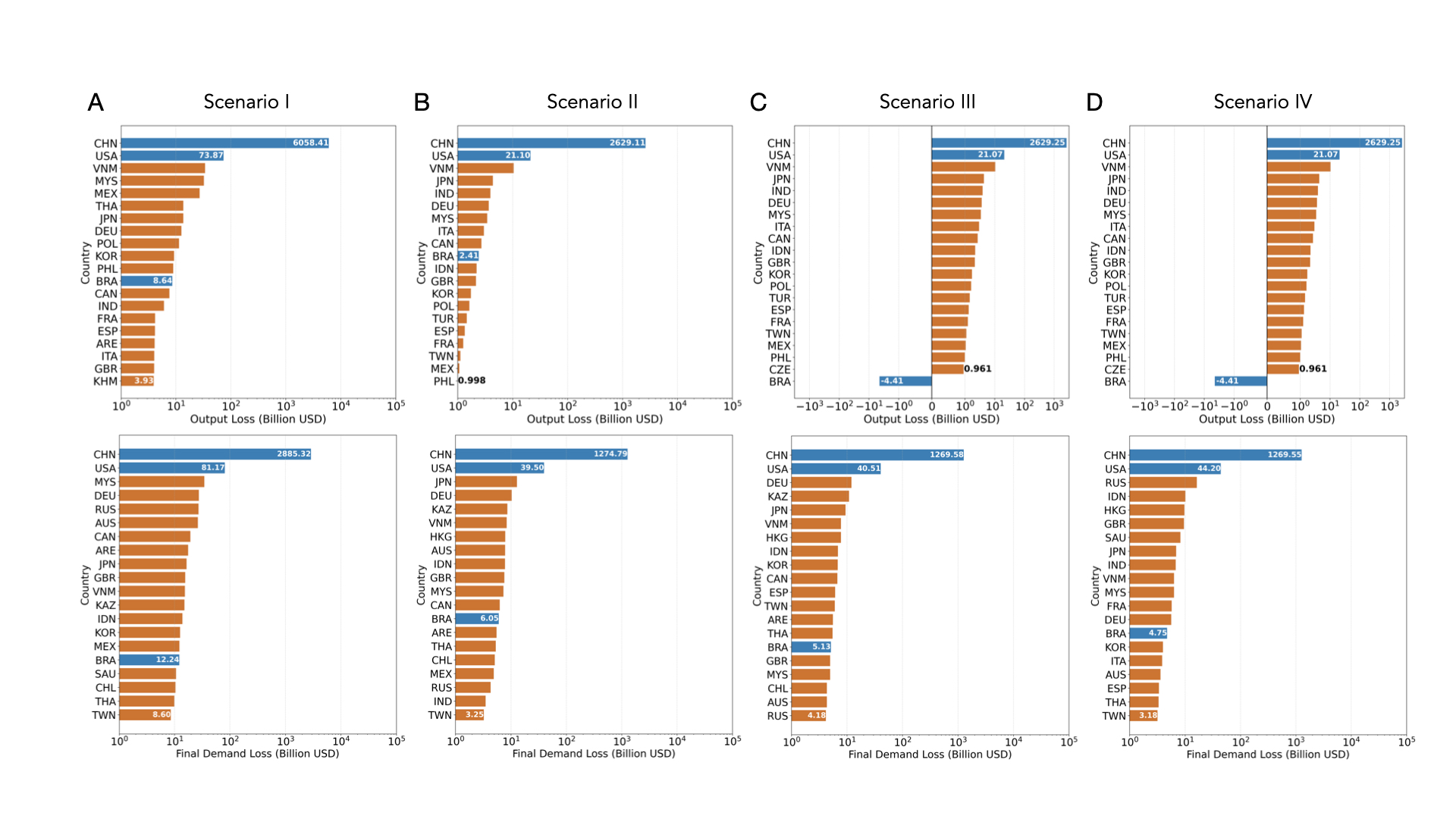}
    \caption{\textbf{Absolute country-level impacts across disruption and mitigation scenarios.} Across scenarios, China consistently experiences the largest impact, while mitigation strategies progressively reduce both the intensity and breadth of global spillovers.
}
    \label{fig:countrywise_ranking_absolute}
\end{figure}

\begin{figure}[h!]
    \centering
    \includegraphics[width = 0.9\linewidth, clip, trim = 2cm 4cm 0cm 4cm]{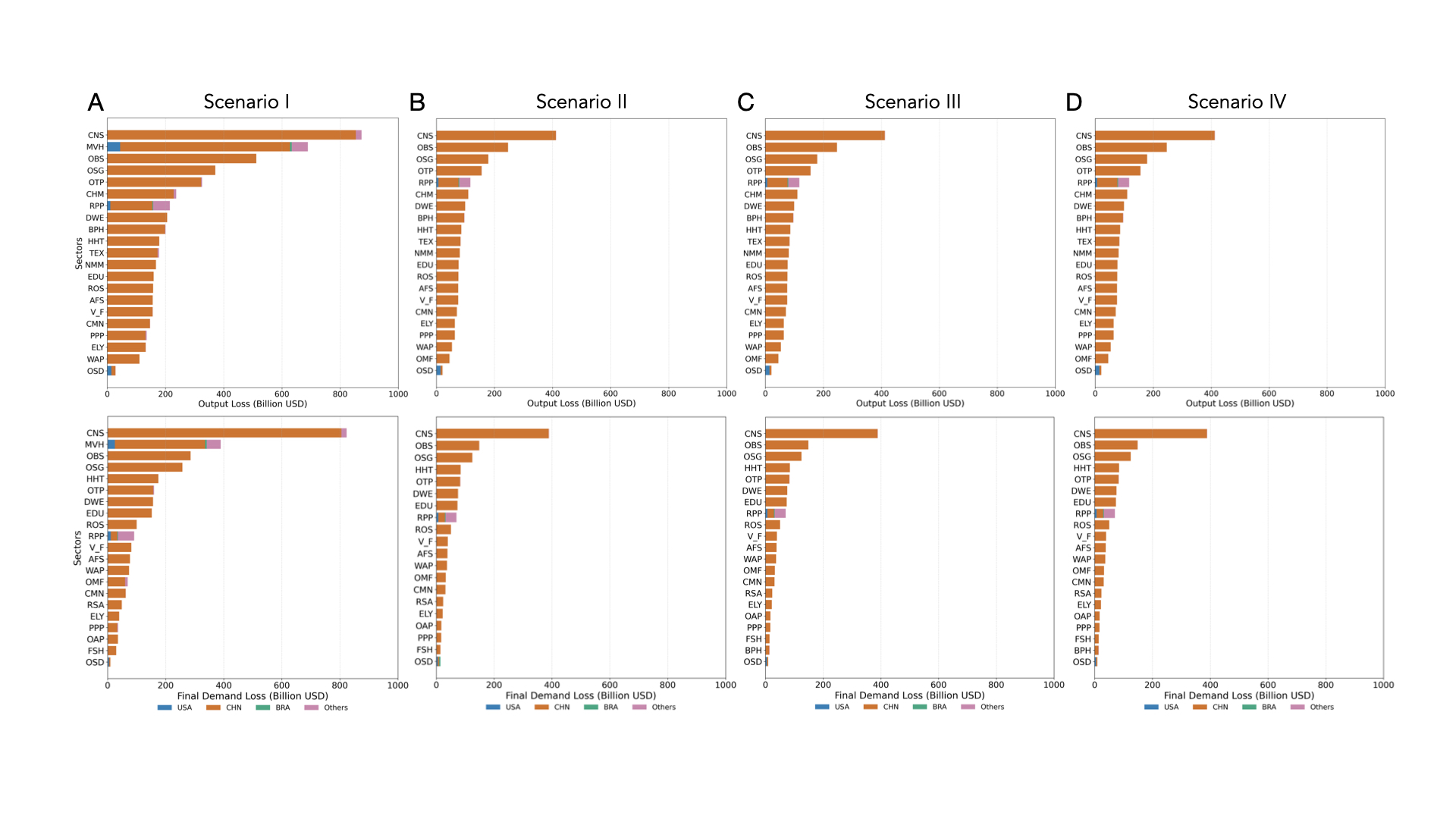}
    \caption{\textbf{Absolute global sector-level impacts across disruption and mitigation scenarios.} Top panels report output loss (\%), and bottom panels report final demand loss (\%) across sectors under four scenarios.
}
    \label{fig:sectorwise_ranking_absolute}
\end{figure}

\begin{figure}[h!]
    \centering
    \includegraphics[width = \linewidth, clip, trim = 2cm 2cm 0cm 4cm]{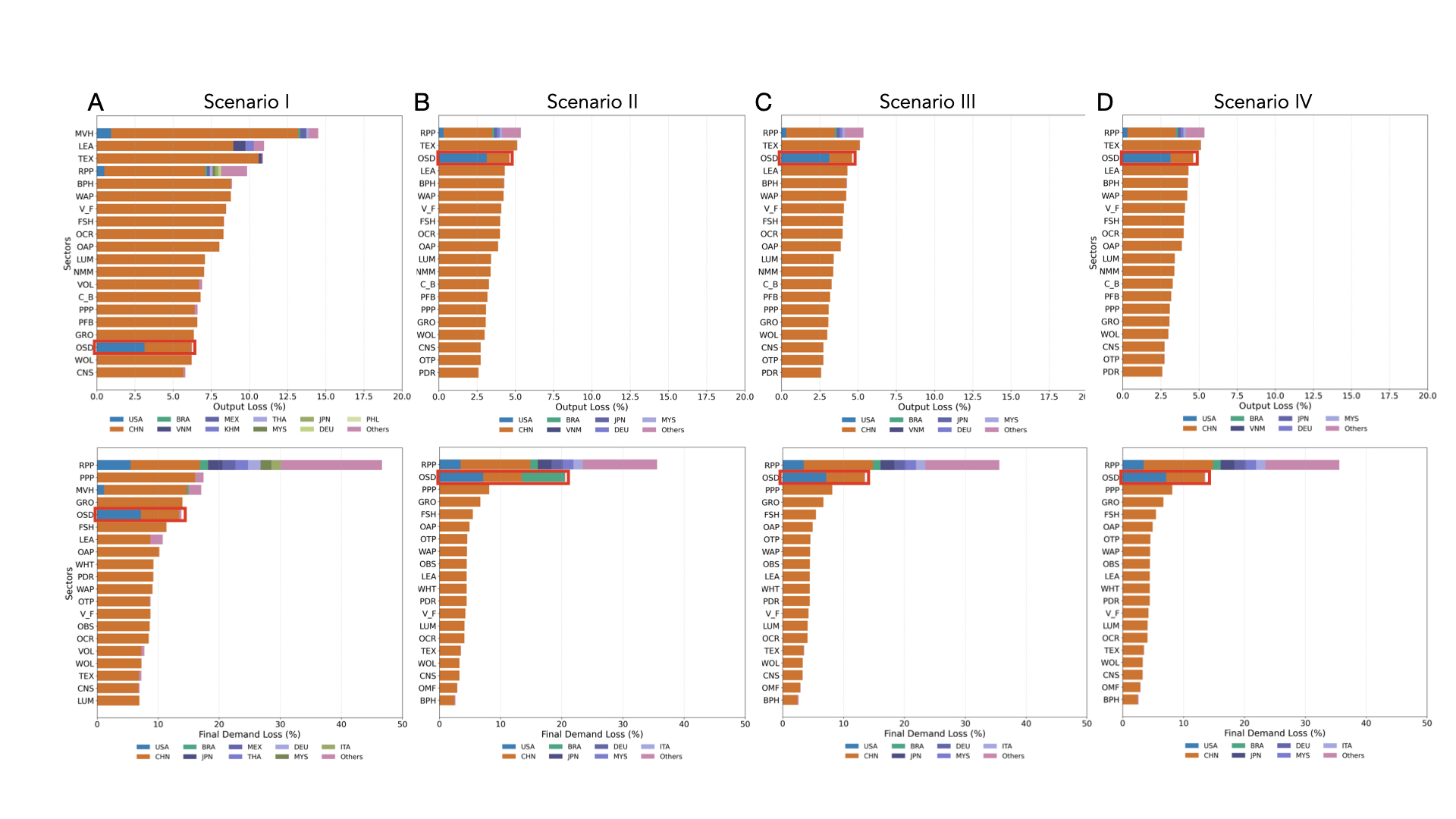}
    \caption{\textbf{Percentage global sector-level impacts across scenarios.} Top panels report output loss (\%), and bottom panels report final demand loss (\%) across sectors under four scenarios. Note that the global sector-level impact remains unchanged for domestic/global partnership-based production increase strategy at a given parameter value, $\eta = 0.10$. 
}
    \label{fig:sectorwise_ranking_percentage}
\end{figure}

\begin{figure}[h!]
    \centering
    \includegraphics[width = 0.99\linewidth, clip, trim = 2cm 2cm 2cm 2cm]{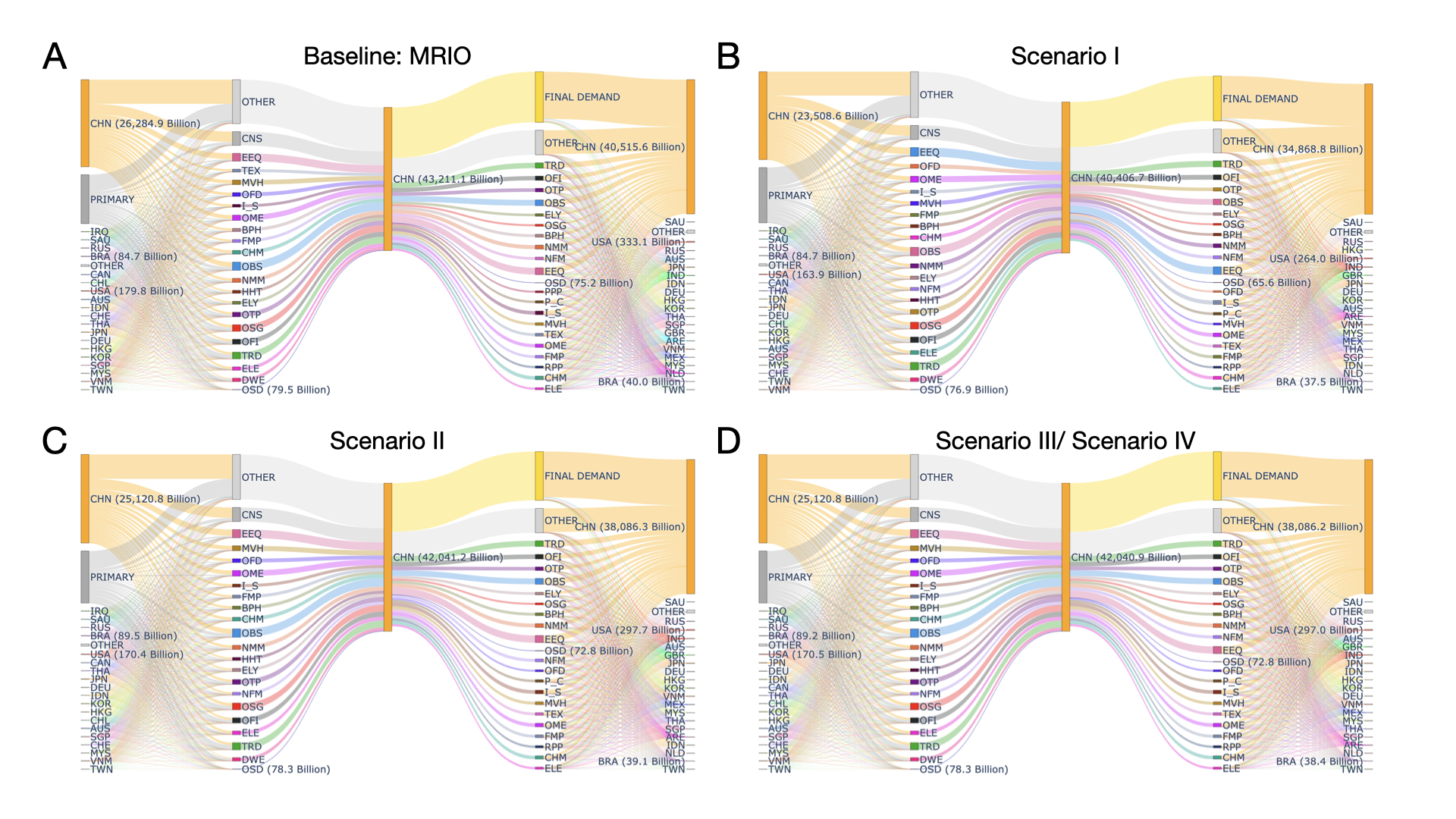}
    \caption{\textbf{Impact of disruption and mitigation scenarios on China's Supply Chain.} We illustrate the top 20 input sector and output sector flow of China, along with final demand. 
}
    \label{fig:sankey_ranking_absolute_china}
\end{figure}

\begin{figure}[h!]
    \centering
    \includegraphics[width = 0.99\linewidth, clip, trim = 2cm 2cm 2cm 2cm]{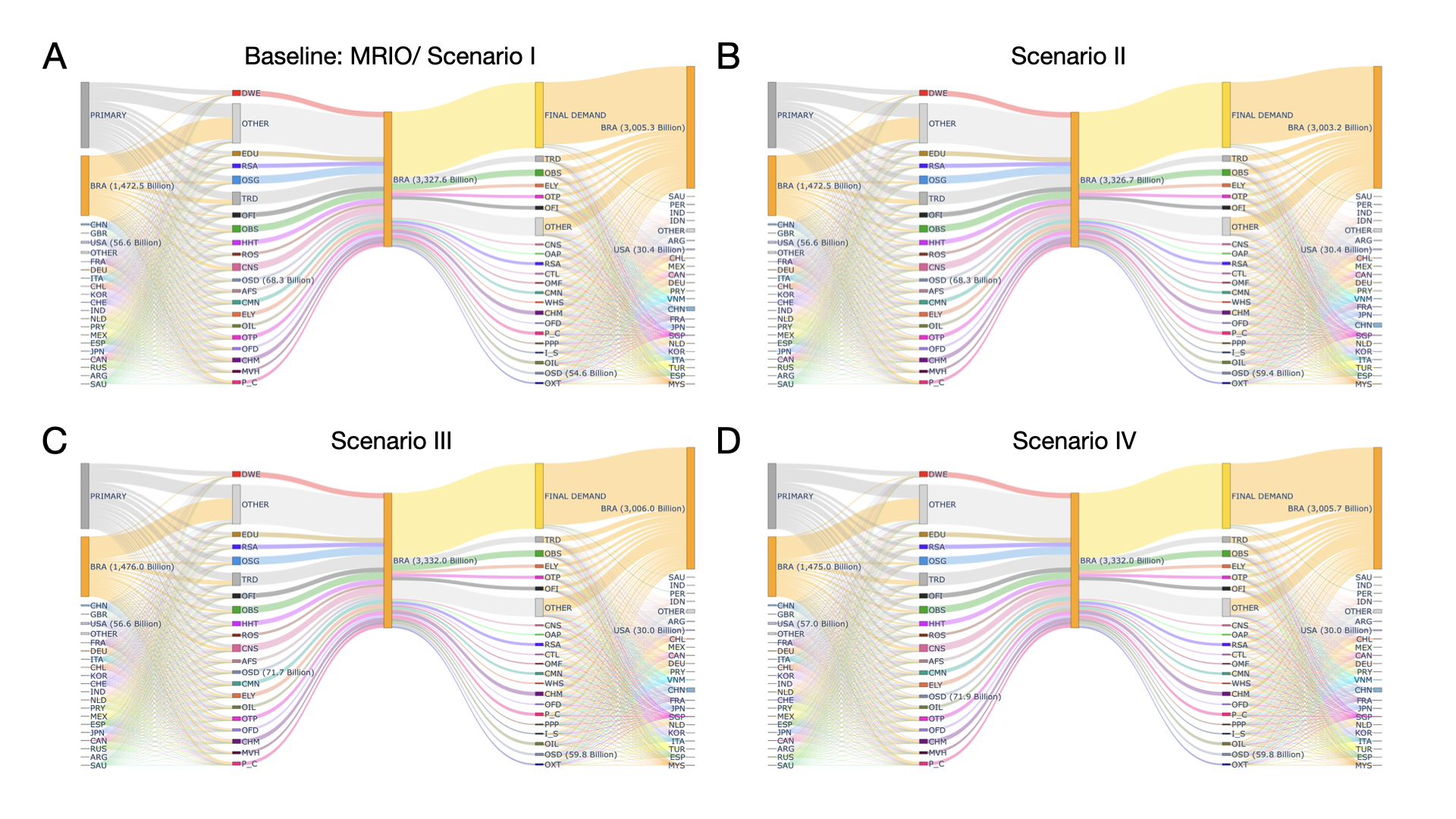}
    \caption{\textbf{Impact of disruption and mitigation scenarios on Brazil's Supply Chain.}  
}
    \label{fig:sankey_ranking_absolute_brazil}
\end{figure}

\begin{figure}[h!]
    \centering
    \includegraphics[width = 0.99\linewidth, clip, trim = 2cm 2cm 2cm 2cm]{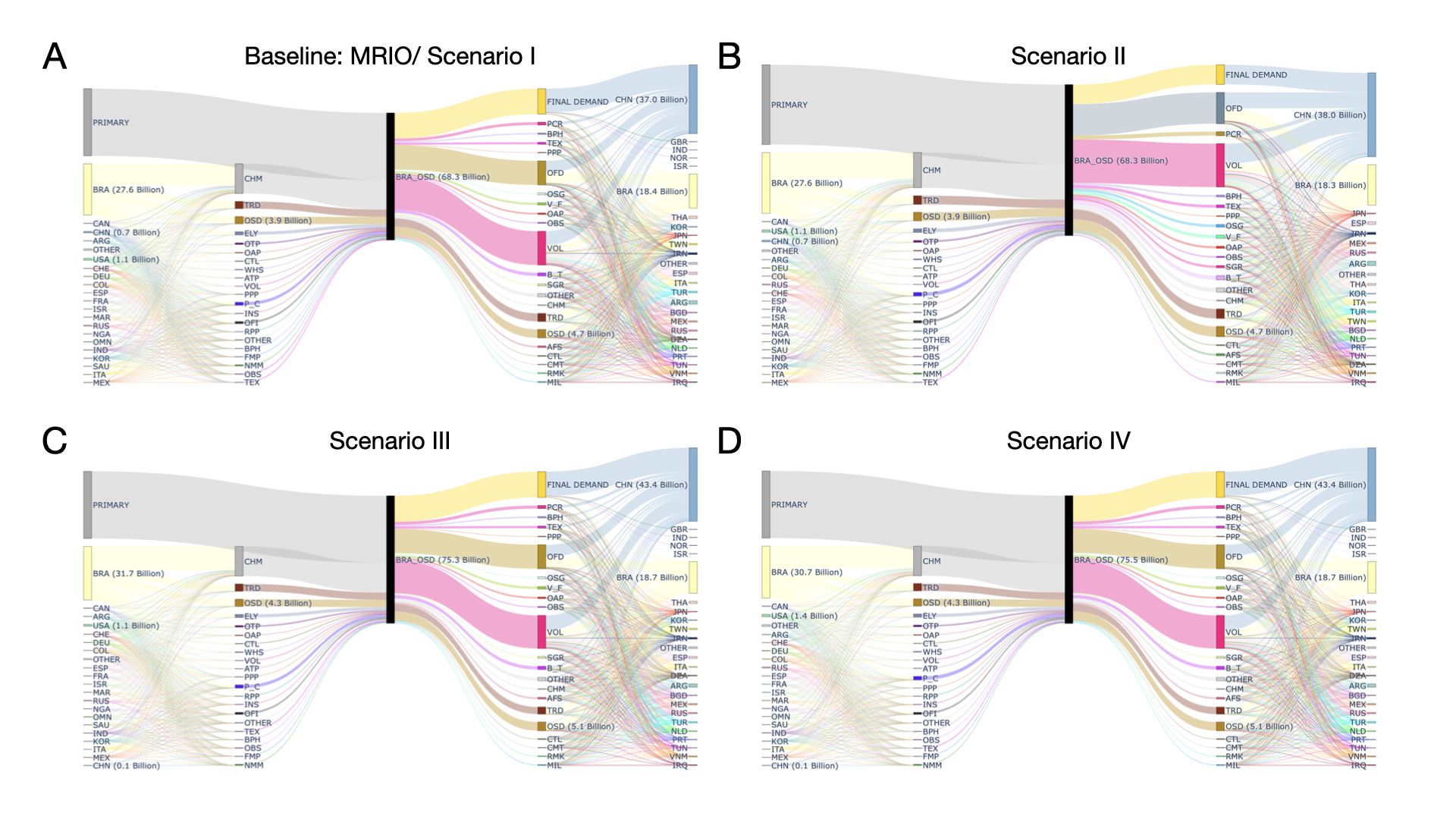}
    \caption{\textbf{Impact of disruption and mitigation scenarios on Brazil Oilseed Firm Supply Chain.}  
}
    \label{fig:sankey_ranking_absolute_brazil_osd}
\end{figure}

\begin{figure}[h!]
    \centering
    \includegraphics[width = 0.99\linewidth, clip, trim = 2cm 2cm 2cm 2cm]{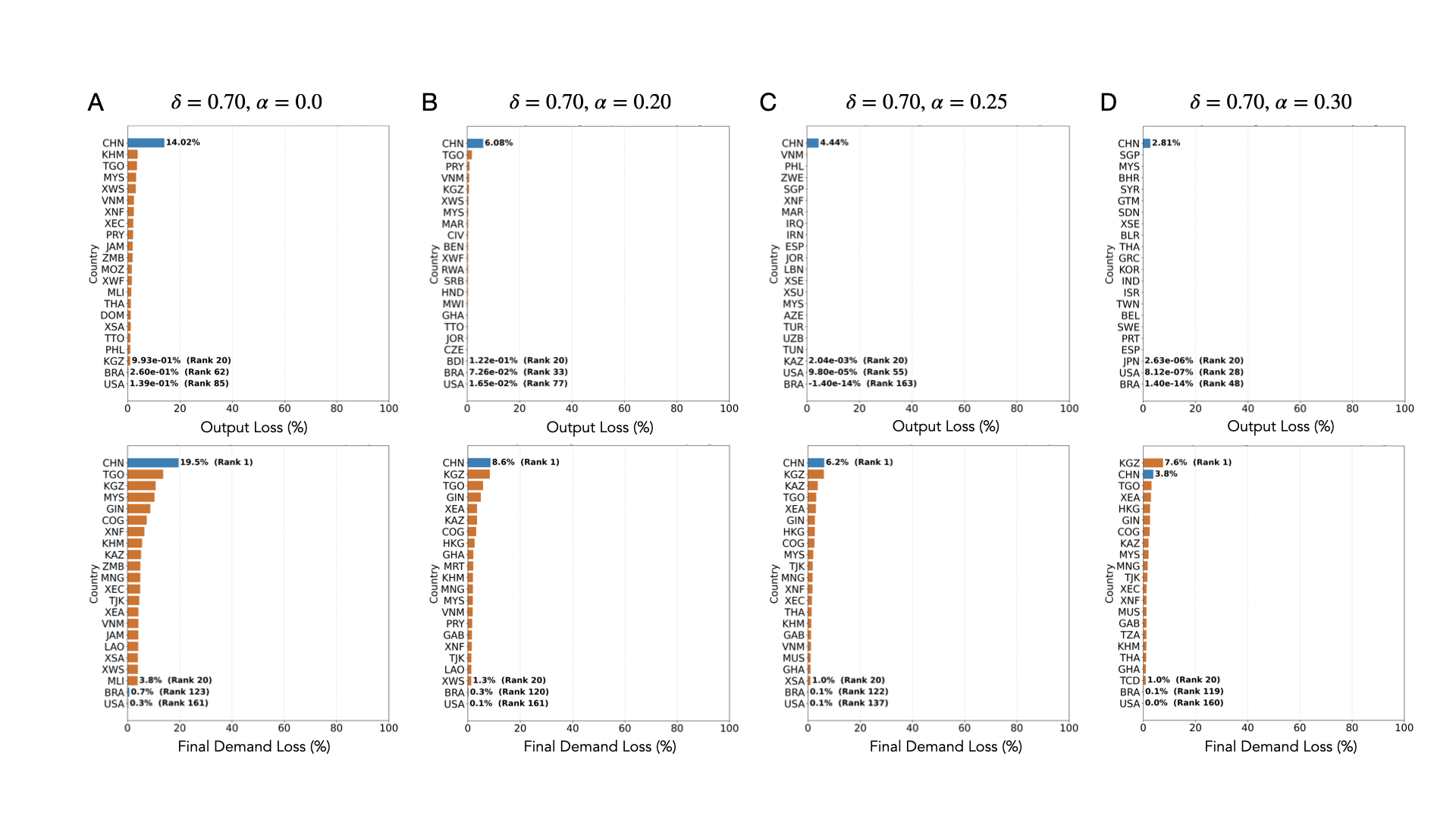}
    \caption{Percentage global country-level impacts at USA-China Oilseed flow disruption at $\delta = 0.70$ and Brazil Oilseed reallocation to China by $\alpha$ fraction.}
    \label{fig:additional_70}
\end{figure}

\begin{figure}[h!]
    \centering
    \includegraphics[width = 0.99\linewidth, clip, trim = 2cm 2cm 2cm 2cm]{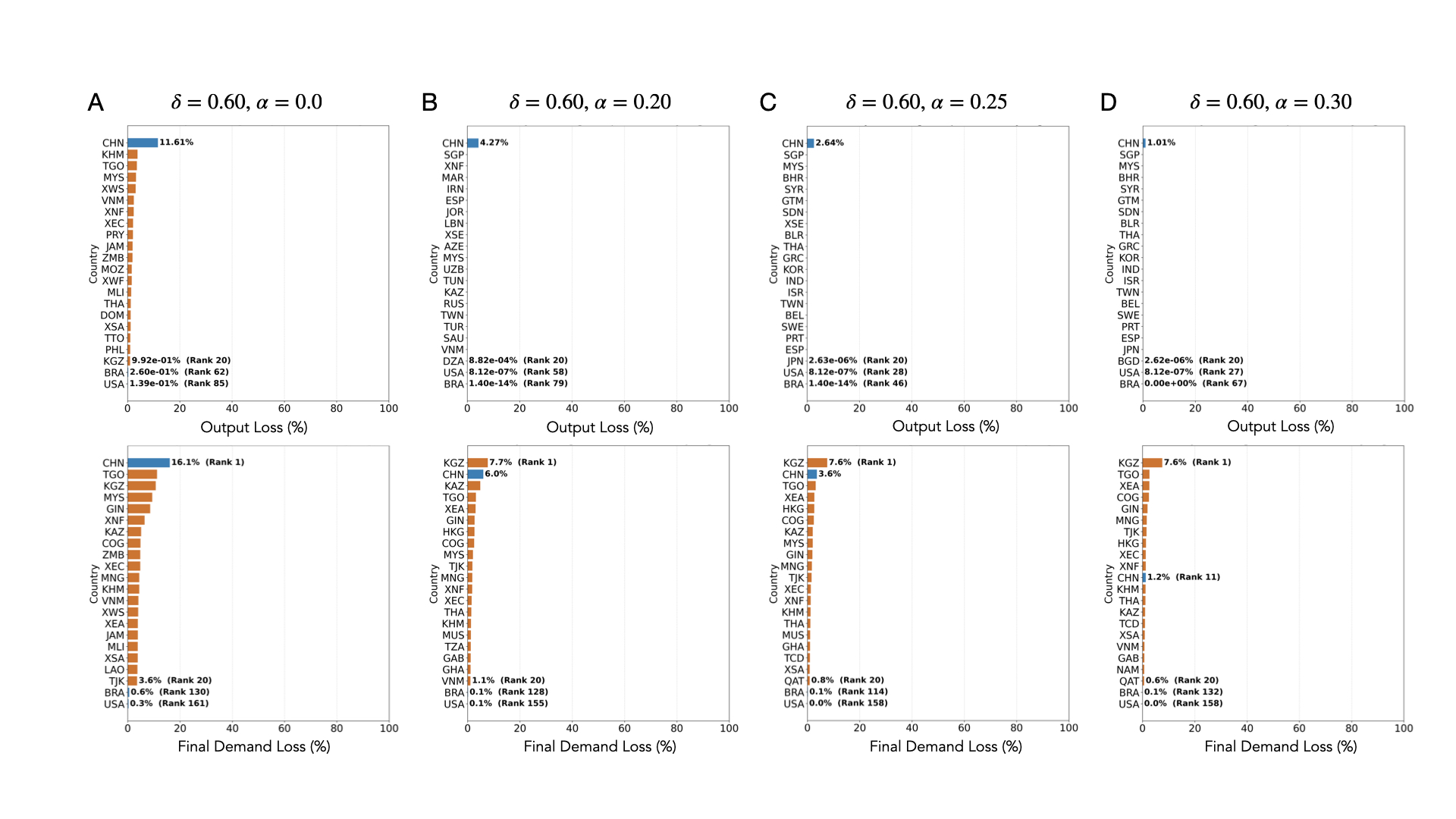}
    \caption{Percentage global country-level impacts at USA-China Oilseed flow disruption at $\delta = 0.60$ and Brazil Oilseed reallocation to China by $\alpha$ fraction.}
    \label{fig:additional_60}
\end{figure}

\newpage 
\begin{figure}[h!]
    \centering
    \includegraphics[width = 0.99\linewidth, clip, trim = 2cm 2cm 12cm 2cm]{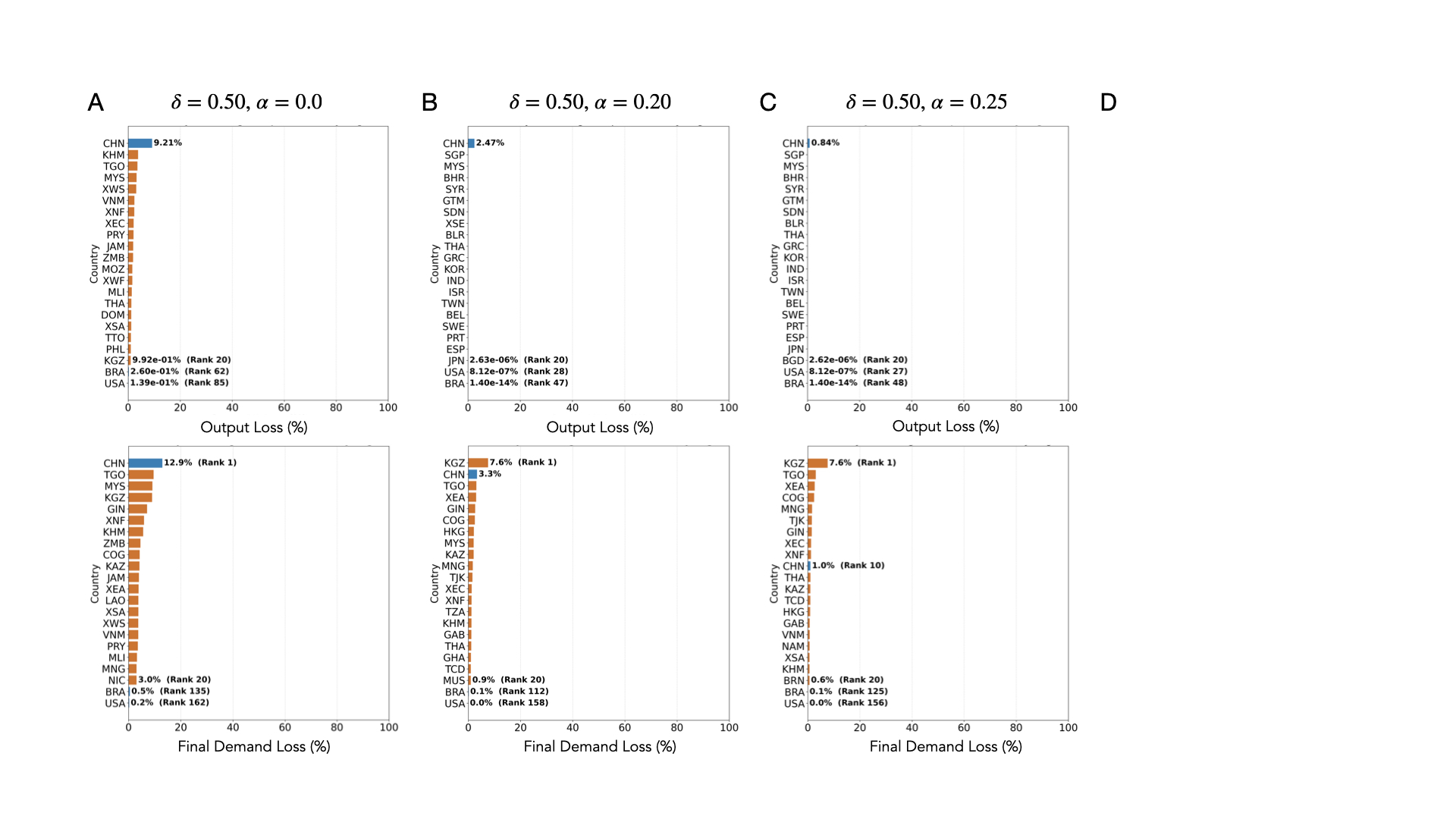}
    \caption{Percentage global country-level impacts at USA-China Oilseed flow disruption at $\delta = 0.50$ and Brazil Oilseed reallocation to China by $\alpha$ fraction.}
    \label{fig:additional_50}
\end{figure}

\begin{table*}[h]
\centering
\tiny
\setlength{\tabcolsep}{3pt}
\renewcommand{\arraystretch}{0.88}
\caption{GTAP sector codes and descriptions.}
\label{tab:GTAP_sectors}
\begin{tabularx}{\textwidth}{@{}>{\ttfamily\bfseries}l X >{\ttfamily\bfseries}l X@{}}
\toprule
\normalfont\textbf{Code} & \textbf{Description} &
\normalfont\textbf{Code} & \textbf{Description} \\
\midrule
pdr & Paddy rice & bph & Basic pharmaceutical products \\
wht & Wheat & rpp & Rubber and plastic products \\
gro & Cereal grains nec & nmm & Mineral products nec \\
v\_f & Vegetables, fruit, nuts & i\_s & Ferrous metals \\
osd & Oil seeds & nfm & Metals nec \\
c\_b & Sugar cane, sugar beet & fmp & Metal products \\
pfb & Plant-based fibers & ele & Computer, electronic and optical products \\
ocr & Crops nec & eeq & Electrical equipment \\
ctl & Cattle, sheep and goats, horses & ome & Machinery and equipment nec \\
oap & Animal products nec & mvh & Motor vehicles and parts \\
rmk & Raw milk & otn & Transport equipment nec \\
wol & Wool, silk-worm cocoons & omf & Manufactures nec \\
frs & Forestry & ely & Electricity \\
fsh & Fishing & gdt & Gas manufacture, distribution \\
coa & Coal & wtr & Water \\
oil & Oil & cns & Construction \\
gas & Gas & trd & Trade \\
oxt & Other extraction & afs & Accommodation, food services \\
cmt & Meat: cattle, sheep, goats & otp & Transport nec \\
omt & Meat products nec & wtp & Water transport \\
vol & Vegetable oils and fats & atp & Air transport \\
mil & Dairy products & whs & Warehousing \\
pcr & Processed rice & cmn & Communication \\
sgr & Sugar & ofi & Financial services nec \\
ofd & Food products nec & ins & Insurance \\
b\_t & Beverages, tobacco & rsa & Real estate activities \\
tex & Textiles & obs & Business services nec \\
wap & Wearing apparel & ros & Recreational services \\
lea & Leather products & osg & Public administration \\
lum & Wood products & edu & Education \\
ppp & Paper, publishing & hht & Health services \\
p\_c & Petroleum, coal products & dwe & Dwellings \\
chm & Chemical products &  &  \\
\bottomrule
\end{tabularx}
\end{table*}

\clearpage
\begin{table*}[h!]
\centering
\tiny
\setlength{\tabcolsep}{3pt}
\renewcommand{\arraystretch}{0.88}
\caption{GTAP region codes and descriptions.}
\label{tab:gtap-region-codes}
\begin{tabularx}{\textwidth}{@{}>{\ttfamily\bfseries}l X >{\ttfamily\bfseries}l X@{}}
\toprule
\normalfont\textbf{Code} & \textbf{Description} &
\normalfont\textbf{Code} & \textbf{Description} \\
\midrule
AUS & Australia & GBR & United Kingdom \\
NZL & New Zealand & CHE & Switzerland \\
XOC & Rest of Oceania & NOR & Norway \\
CHN & China & XEF & Rest of European Free Trade Association \\
HKG & Hong Kong, Special Administrative Region of China & ALB & Albania \\
JPN & Japan & SRB & Serbia \\
KOR & Korea, Republic of & BLR & Belarus \\
MNG & Mongolia & RUS & Russian Federation \\
TWN & Chinese Taipei & UKR & Ukraine \\
XEA & Rest of East Asia & XEE & Rest of Eastern Europe \\
BRN & Brunei Darussalam & XER & Rest of Europe \\
KHM & Cambodia & KAZ & Kazakhstan \\
IDN & Indonesia & KGZ & Kyrgyzstan \\
LAO & Lao PDR & TJK & Tajikistan \\
MYS & Malaysia & UZB & Uzbekistan \\
PHL & Philippines & XSU & Rest of Former Soviet Union \\
SGP & Singapore & ARM & Armenia \\
THA & Thailand & AZE & Azerbaijan \\
VNM & Viet Nam & GEO & Georgia \\
XSE & Rest of Southeast Asia & BHR & Bahrain \\
AFG & Afghanistan & IRN & Iran, Islamic Republic of \\
BGD & Bangladesh & IRQ & Iraq \\
IND & India & ISR & Israel \\
NPL & Nepal & JOR & Jordan \\
PAK & Pakistan & KWT & Kuwait \\
LKA & Sri Lanka & LBN & Lebanon \\
XSA & Rest of South Asia & OMN & Oman \\
CAN & Canada & PSE & Palestinian Territory, Occupied \\
USA & United States of America & QAT & Qatar \\
MEX & Mexico & SAU & Saudi Arabia \\
XNA & Rest of North America & SYR & Syrian Arab Republic \\
ARG & Argentina & TUR & Turkey \\
BOL & Bolivia & ARE & United Arab Emirates \\
BRA & Brazil & XWS & Rest of Western Asia \\
CHL & Chile & DZA & Algeria \\
COL & Colombia & EGY & Egypt \\
ECU & Ecuador & MAR & Morocco \\
PRY & Paraguay & TUN & Tunisia \\
PER & Peru & XNF & Rest of North Africa \\
URY & Uruguay & BEN & Benin \\
VEN & Venezuela (Bolivarian Republic of) & BFA & Burkina Faso \\
XSM & Rest of South America & CMR & Cameroon \\
CRI & Costa Rica & CIV & Côte d'Ivoire \\
GTM & Guatemala & GHA & Ghana \\
HND & Honduras & GIN & Guinea \\
NIC & Nicaragua & MLI & Mali \\
PAN & Panama & MRT & Mauritania \\
SLV & El Salvador & NER & Niger \\
XCA & Rest of Central America & NGA & Nigeria \\
DOM & Dominican Republic & SEN & Senegal \\
HTI & Haiti & TGO & Togo \\
JAM & Jamaica & XWF & Rest of Western Africa \\
PRI & Puerto Rico & AGO & Angola \\
TTO & Trinidad and Tobago & CAF & Central African Republic \\
XCB & Rest of Caribbean & TCD & Chad \\
AUT & Austria & COG & Congo \\
BEL & Belgium & COD & Democratic Republic of the Congo \\
BGR & Bulgaria & GNQ & Equatorial Guinea \\
HRV & Croatia & GAB & Gabon \\
CYP & Cyprus & STP & Sao Tome and Principe \\
CZE & Czech Republic & BDI & Burundi \\
DNK & Denmark & COM & Comoros \\
EST & Estonia & ETH & Ethiopia \\
FIN & Finland & KEN & Kenya \\
FRA & France & MDG & Madagascar \\
DEU & Germany & MWI & Malawi \\
GRC & Greece & MUS & Mauritius \\
HUN & Hungary & MOZ & Mozambique \\
IRL & Ireland & RWA & Rwanda \\
ITA & Italy & SDN & Sudan \\
LVA & Latvia & TZA & Tanzania, United Republic of \\
LTU & Lithuania & UGA & Uganda \\
LUX & Luxembourg & ZMB & Zambia \\
MLT & Malta & ZWE & Zimbabwe \\
NLD & Netherlands & XEC & Rest of Eastern Africa \\
POL & Poland & BWA & Botswana \\
PRT & Portugal & SWZ & Eswatini \\
ROU & Romania & NAM & Namibia \\
SVK & Slovakia & ZAF & South Africa \\
SVN & Slovenia & XSC & Rest of South African Customs Union \\
ESP & Spain & XTW & Rest of the World \\
SWE & Sweden &  &  \\
\bottomrule
\end{tabularx}
\end{table*}

\clearpage

\end{document}